\documentclass[journal,onecolumn]{IEEEtran}

\usepackage{amssymb,bm,amsthm,color,amsmath,url,multirow}
\usepackage{diagbox}
\usepackage{graphicx}
\usepackage{subfigure}

\usepackage{cite}

\newtheorem{thm}{Theorem}[section]
\newtheorem{ques}{Question}[section]
\newtheorem{lem}{Lemma}[section]
\newtheorem{dfn}{Definition}[section]

\newtheorem{rmk}{Remark}[section]

\newtheorem{example}{Example}[section]
\newtheorem{clm}{Claim}
\newenvironment{pf}{{\noindent\it Proof:}}{\hfill $\blacksquare$\par}

%\theoremheaderfont{\textup}
\newcommand{\RNum}[1]{\lowercase\expandafter{\romannumeral #1\relax}}
\newcommand{\Rnum}[1]{\uppercase\expandafter{\romannumeral #1\relax}}

\begin{document}

\title{A generic framework for coded caching and distributed computation schemes}
\author{Min Xu, Zixiang Xu, Gennian Ge and Min-Qian Liu
	
\thanks{This project was supported by the National Key Research and Development Program of China under Grant 2020YFA0712100 and Grant 2018YFA0704703, the National Natural Science Foundation of China under Grant 11971325, Grant 12131001  and Grant
12231014,  Beijing Scholars Program, the Institute for Basic Science (IBS-R029-C4), and National Ten Thousand Talents Program.}
	
		\thanks{M. Xu ({\tt minxu0716@qq.com}) and M. Liu ({\tt mqliu@nankai.edu.cn}) are with the School of Statistics and Data Science, LPMC \& KLMDASR, Nankai University, Tianjin 300071, China.}
		\thanks{Z. Xu ({\tt zxxu8023@qq.com}) is with Extremal Combinatorics and Probability Group, Institute for Basic Science, Daejeon, South Korea.  This work was started when Z. Xu is a Ph.D student in Capital Normal University.}
		\thanks{G. Ge ({\tt gnge@zju.edu.cn}) is with the School of Mathematical Sciences, Capital Normal University,
			Beijing 100048, China.

	}}

\maketitle
\begin{abstract}
Several network communication problems are highly related such as coded caching and distributed computation.
The centralized coded caching focuses on reducing the network burden in peak times in a wireless network system and the coded distributed computation studies the tradeoff between computation and communication in distributed system.
In this paper, motivated by the study of the only rainbow $3$-term arithmetic progressions set, we propose a unified framework for constructing coded caching schemes. This framework builds bridges between coded caching schemes and lots of combinatorial objects due to the freedom of the choices of families and operations.
We prove that any scheme based on a placement delivery array (PDA) can be represented by a rainbow scheme under this framework and lots of other known schemes can also be included in this framework.
Moreover, we also present a new coded caching scheme with linear subpacketization and near constant rate using the only rainbow $3$-term arithmetic progressions set.
Next, we modify the framework to be applicable to the distributed computing problem. We present a new transmission scheme in the shuffle phase and show that in certain cases it could have a lower communication load than the schemes based on PDAs or resolvable designs with the same number of files.
\end{abstract}

\begin{IEEEkeywords}
Coded caching scheme, distributed computing, unified framework, only rainbow arithmetic progressions set.
\end{IEEEkeywords}

\section{Introduction}
In recent years, wireless communication systems such as $5G$ are becoming more and more content-centric, while the scale of real world data is becoming larger, the cost of communication between remote users and servers relies on the size of data. There are several problems aiming at reducing the communication load in different applications such as coded caching and coded distributed computing.

The first problem we investigate is coded caching. In our daily life, wireless traffic has become a problem. One of the main driving factors for wireless traffic is the dramatic increase in demand for video content. Moreover, the high temporal variability of network traffic results in communication systems to be congested during peak-traffic times but under utilized during off-peak times. One approach to reduce peak traffic is to take advantage of memories distributed across the network to duplicate content. This duplication of content, called caching, is performed during off-peak times when network resources are abundant. During peak-traffic times, user demands can be served from these caches and the network congestion will be reduced. Hence, coded caching schemes are widely studied because of their applications in reducing the network burden and smoothing the network traffic.

The research of designing caching schemes to exploit the benefit of coding was initiated by Maddah-Ali and Niesen \cite{2014AliIT}. In their seminal work, they proposed the first centralized coded caching scheme, where the central idea is
to design an appropriate content placement and delivery strategy. From then on, the problem of designing centralized coded caching scheme is often called the coded caching problem.

Recently, several interesting methods were used to construct coded caching schemes. Yan et al. \cite{2017PDAIT} represented the placement delivery array (PDA) framework and the coded caching scheme they proposed has significantly lower subpacketizations than that of the Maddah-Ali and Niesen scheme in \cite{2014AliIT}. In \cite{2018ShangguanIT}, Shangguan et al. established a connection between coded caching schemes and $3$-partite $3$-uniform $(6,3)$-free hypergraphs and constructed coded caching schemes with constant rate and subpacketizations increasing sub-exponentially with the number of users. This connection was further expanded upon in terms of strong edge colorings of bipartite graphs by Yan et al. \cite{2018SEC}. In \cite{2017ISITrsgraph}, Shanmugam et al. showed coded caching with linear subpacketizations and near constant rate is possible using Ruzsa-Szemer\'{e}di graphs. There are also some other combinatorial approaches such as linear block codes~\cite{2017TANGLIISIT, 2018ITLinearBlock}, line graphs of bipartite graphs~\cite{2018Linegraph, 2019LinegraphISIT, 2019LinegraphISCIT, 2019Linegraph1902, 2020Linegraph} and combinatorial designs~\cite{2019ISITCD}.

The centralized coded caching schemes were also extended to a variety of more practical settings. For example, the decentralized coded caching was introduced in \cite{2015Decentralized}, where the coded delivery scheme is shown to achieve large gains in the rate, under a random or decentralized caching phase. The coded caching in popularity-based caching settings, online coded cachings and hierarchical coded cachings were also proposed in \cite{2018ITPopoularity, 2016Online1, 2017Online2, 2016ITHierarchical}. Moreover, several interesting settings such as D$2$D-network, distributed computing and cache-aided interference management \cite{2016ITD2DJi, 2018ITdistributed, 2017ITcacheaid} also take advantage of the constructions of centralized coded caching schemes. In a word, the design of centralized coded caching schemes is still of great importance.

Another important problem in reducing the communication load which focuses on distributed scenario is coded distributed computing. Due to the rapid growth of large-scale machine learning and big data analysis, the distributed system has been widely used in daily life. Storing data distributedly brings some advantages while increases the communication load when we need the data which is not stored locally. In real life, we not only need to read the data itself, but also need to get some special functions about the data.

There are some frameworks of function computing on a distributed data system, such as MapReduce \cite{2004deanmapreduce} and Spark \cite{2010zahariaspark}. Based on MapReduce, Li et al.~\cite{2018ITdistributed} provided a tradeoff between communication load and computation load and they constructed a scheme achieving the optimal communication load where the storage pattern is given. It is worth to note that the main idea of the scheme in \cite{2018ITdistributed} is similar with the MAN scheme in coded caching~\cite{2014AliIT}. Thus, the PDA scheme can also be applied in distributed computing while containing the MAN scheme as a special case~\cite{2020YanDistributed}. Another combinatorial object known as resolvable design was connected to a distributed computing scheme by Konstantinos et al.~\cite{2020KKRA}. They pointed out that the large number of small tasks in~\cite{2018ITdistributed} has detrimental effects on the performance of the scheme, which means though the MAN scheme attains the optimal communication load in theoretical prediction, it requires higher shuffling time. Therefore, it is reasonable to design a scheme which requires fewer files.

Our contribution can be concluded as follows.
\begin{enumerate}
    \item First, we propose a unified rainbow framework to yield centralized coded caching schemes. After coloring some elements with specific rules, we show that the uncolored elements can encode storage actions while elements receiving the same color encode delivery actions (XORs), which leads to a simple but very useful relationship between the coded caching scheme and these elements. Moreover, we use this new idea to review several existing works and discuss the way to improve the existing schemes.
    \item Next, we study the problem of constructing centralized coded caching schemes with low subpacketization level based on only rainbow $3$-term arithmetic progressions sets, which are proposed by Pach and Tomon \cite{2019PachRainbowAP}. We present a coded caching scheme with linear subpacketization and near constant rate. Moreover, we propose a new delivery scheme based on some results in index coding problem, which can further reduce the transmission load.
    \item At last, we apply the rainbow framework into coded distributed computing problem, and construct a new scheme which has a different shuffle phase. Comparing with the schemes based on PDAs or resolvable designs, our new scheme has a lower communication load with the same number of required files.
\end{enumerate}

The rest of this paper is organized as follows. In Section~\ref{pre}, we introduce two kinds of network communication problems known as coded caching and distributed computing, as well as the coloring problem which is the main tool of our new schemes. Motivated by the rainbow structure, we propose a generalized rainbow framework in Section~\ref{framework}. Surprisingly, we find out that any PDA scheme can be represented in the rainbow framework, and present several examples in Section~\ref{existingscheme}. In Section~\ref{rainbow}, we derive a rainbow scheme with new parameters based on the only rainbow $3$-term arithmetic progressions set. Next, we modify the rainbow framework to be applicable to distributed computing problem in Section~\ref{distributed}. At last, we conclude our main results and propose some open problems in Section~\ref{conclusion}.

\section{Preliminary}\label{pre}
In this section, we introduce two kinds of problems in network communication and the main tool we used to construct our new caching scheme.
\subsection{Coded caching}
The first problem we study is the coded caching problem, which was first investigated by Maddah-Ali et al. in 2014 \cite{2014AliIT}.
In this kind of problem, there is a central server with a library of $N$ files $\{W_1, W_2,\ldots, W_N\}$, each file is $F$ bits or can be partitioned into $F$ subfiles.
Suppose in this system there are $K$ users, each of which has a cache with size of $M$ files and requires one file in the library.
Suppose user $i$ requires file $W_{d_i}$, denote the demand vector as $\mathbf{d}=(d_1,\ldots,d_K)$.
Before the users sending their demands $\mathbf{d}$ to the server, the server fills up all the caches.
When the server knows the users' demands, it sends $X_{\mathbf{d}}$ according to the users' caches and their demands $\mathbf{d}$.
The server and the users are connected by an error-free shared link, that is, every message sent by the server can be seen by all users.
The transmission rate of this system is defined as
\begin{equation*}
  R=\max\limits_{\mathbf{d}\in N^K}\frac{|X_\mathbf{d}|}{F}.
\end{equation*}
The main purpose of coded caching is to design the placement of subfiles such that the transmission load $R$ for all possible users' demands is as small as possible.
In this paper we focus on the caching schemes with uncoded placement, that is, each user caches subfiles directly instead of functions of subfiles.

In \cite{2014AliIT}, a tradeoff between $K,M,N$ and $R$ was given as follows,
\begin{equation*}
  R^*\geq \max_{s\in\{1,\ldots,\min\{N,K\}\}}\bigg(s-\frac{s}{\lfloor N/s\rfloor}M\bigg).
\end{equation*}
They also presented a scheme with rate
\begin{equation*}
  R=K\bigg(1-\frac{M}{N}\bigg)\cdot\frac{1}{1+\frac{KM}{N}},
\end{equation*}
which was proved to be the optimal scheme under the uncoded placement \cite{2016Kai}. Although the rate is optimal, the subpaketization $F=\exp(K)$, which increases the complexity of the scheme. Various works about reducing the subpaketizations $F$ while keeping rate $R$ low have been done \cite{2019ISITCD,2018Linegraph, 2019LinegraphISIT, 2019LinegraphISCIT, 2019Linegraph1902, 2020Linegraph},\cite{2018ShangguanIT},\cite{2017ISITrsgraph,2017TANGLIISIT, 2018ITLinearBlock},\cite{2017PDAIT},\cite{2018SEC}. The most general scheme among these works is the placement delivery array (PDA)\cite{2017PDAIT}. It is convenient to use this array to represent the placement and the delivery phase of a coded caching scheme. Moreover, the connections between a coded caching scheme and other combinatorial objects were studied in recent years. Constructing a scheme with better performance is still an interesting problem.
\begin{figure}[h]
\centering
\includegraphics[width=10cm]{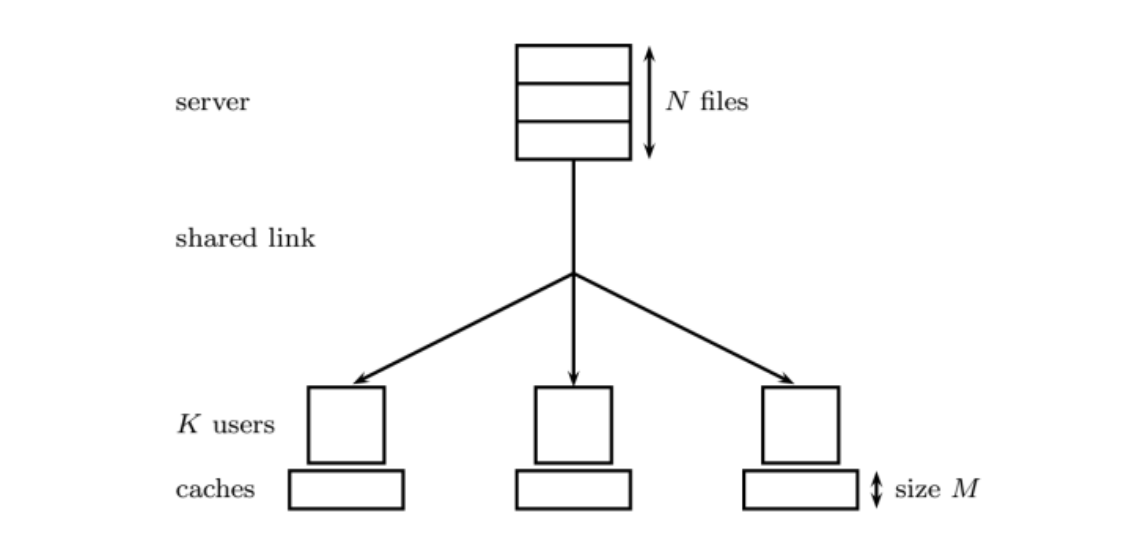}
\caption{Coded caching system}\label{picture:CCC}
\end{figure}

\subsection{Distributed computing}
Another problem considered in this paper is the distributed computation problem under the famous MapReduce framework \cite{2004deanmapreduce}.
In the MapReduce model, there is a master node who wants to compute $Q$ functions $\phi_1,\ldots,\phi_Q$ about $N$ files $W_1,\ldots,W_N$.
Suppose all files have the same size, and each function can be decomposed as map functions and reduce functions as
\begin{equation*}
  \phi_j(W_1,\cdots,W_N)=f_j(g_{j1}(W_1),\cdots,g_{jN}(W_N)),
\end{equation*}
where $\{g_{jn}:j\in[Q],n\in[N]\}$ are map functions and $\{f_j:j\in[Q]\}$ are reduce functions. Let $v_{jn}=g_{jn}(W_n)$ be the intermediate value of the computing task. Assume that each intermediate value is $T$ bits, i.e. $v_{jn}\in\mathbb{F}_2^T$.

The MapReduce model contains three phases: map, shuffle and reduce.
In the map phase, the master node assigns $N$ files to $K$ distributed work nodes. Each work node stores a subset of files $\mathcal{M}_k\subset\{W_i,i\in[N]\}.$
The computation load $r$ is the average number of each file stored in all work nodes, i.e.
\begin{equation*}
  r=\frac{\sum_{k\in[K]}|\mathcal{M}_k|}{N}.
\end{equation*}
Work node $k$ computes $\{v_{jn}:j\in[Q],n\in\mathcal{M}_k\}$, which means the node computes intermediate values based on the files it has.
In the shuffle phase, all work nodes need to communicate with each other to get the files they do not have. Each work node will be assigned $Q/K$ target functions. To finish the computation tasks, each node will send a function of intermediate values it has to other nodes based on the target function assignment. We use $X_k$ to denote the message sent by node $k$ and $|X_k|$ is the size of $X_k$. Then, the communication load $L$ is defined as
\begin{equation*}
  L=\frac{\sum_{k\in[K]}|X_k|}{QNT}.
\end{equation*}
In the reduce phase, after receiving all messages from other nodes, work node $k$ can get all intermediate values to finish the computation.

It's obvious that there exists a tradeoff between the communication load $L$ and computation load $r$, since if $r=K$, then $L=0$ and if $r$ becomes smaller, there are more intermediate values which are needed to be sent. Li et al. \cite{2018ITdistributed} studied this tradeoff and proposed a scheme named as CDC attaining the optimal tradeoff. The optimal tradeoff between $L$ and $r$ is characterized as
\begin{equation*}
  L^*=\frac{1}{r}(1-\frac{r}{K}).
\end{equation*}
There is a tight connection between the CDC scheme and the MAN scheme in coded caching, thus the CDC scheme also requires a large number of files, $N=\exp(K)$, which will highly increase the computation time in practice. Based on this observation, Yan et al. \cite{2020YanDistributed} used PDA to construct a distributed computing scheme which has a larger communication load and a smaller number of files. Almost at the same time, Konstantinos et al.\cite{2020KKRA} used another combinatorial object to construct a scheme with $L=\frac{1}{r-1}(1-\frac{r}{K})$ and $N={\rm{Poly}}(K)$. Constructing new scheme with better performance is still an interesting problem in this area.

\subsection{Only rainbow arithmetic progressions set}

We start with the definition of only rainbow arithmetic progressions set. For convenience, we usually write $k$-term arithmetic progressions as $k$-APs. We first introduce the concept of only rainbow $k$-APs set.
\begin{dfn}[Only rainbow $k$-APs set]
  Let $A$ be a subset of $[n],$ and $\Phi$ be a coloring function which colors every element of $A.$ Say $A$ is an only rainbow $k$-APs set if all $k$-APs in $A$ are rainbow, that is, each element in a $k$-AP receives distinct colors.
\end{dfn}
How many colors do we need to make sure that every $k$-AP is rainbow, that is, all of its elements receive distinct colors? For example, if $k=3,$ then at least $\frac{n}{2}$ colors are needed. Instead of coloring the whole set $[n],$ very recently, Pach and Tomon \cite{2019PachRainbowAP} considered the problem on dense subsets of $[n].$ They gave a surprising result that much fewer colors suffice if we do not insist on coloring all elements in $[n].$ More precisely, they showed the following result for $k=3$.
\begin{thm}[\!\!\cite{2019PachRainbowAP}]\label{thm:pach}
  Let $C$ be a sufficiently large integer and $n=C^{d}$ for some integer $d.$ There is a set $A\subseteq [n]$ with $|A|\geqslant n-n^{\alpha}$ and a coloring of $A$ with $n^{\beta}$ colors such that every $3$-AP in $A$ is rainbow, where $\alpha=1-\frac{1}{18C^{6}\log{C}}$ and $\beta=\log_{C}(10C^{\frac{16}{C}}\log_{2}{C}).$
\end{thm}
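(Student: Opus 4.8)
The plan is to exploit the self-similar structure $n=C^{d}$ and to resolve each $3$-AP at the scale of its own common difference. Writing every element of $[n]$ in base $C$ as $(x_{0},\dots,x_{d-1})$, consider a $3$-AP $a,a+s,a+2s$ and let $j$ be the position of the leading nonzero base-$C$ digit of $s$, with leading digit value $\delta\in\{1,\dots,C-1\}$. Away from carries, the three elements then differ in digit $j$ by the three \emph{distinct} values $x_{j},x_{j}+\delta,x_{j}+2\delta$, so any coloring assembled from the individual digits renders the $3$-AP rainbow as soon as its local coloring of digit $j$ separates these three values. Accordingly I would take the global coloring to be a single local coloring $\psi\colon\{0,\dots,C-1\}\to[t]$ applied coordinatewise, using $t^{d}=n^{\beta}$ colors, and obtain $A$ as $[n]$ minus a sparse defective set $E$.

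This reduces the theorem to a purely local statement: construct $\psi$ on $\{0,\dots,C-1\}$ with only $t\approx 10\,C^{16/C}\log_{2}C$ colors so that for all but a sparse set of pairs $(u,\delta)$ the triple $(u,u+\delta,u+2\delta)$ is rainbow. Since making \emph{every} local triple rainbow would again cost $\Theta(C)$ colors, I would solve the local problem by a recursion on the $q$-adic scale of the shift: a fixed small odd modulus $q$ makes $(u,u+\delta,u+2\delta)$ rainbow for every $\delta$ it does not divide, one recurses on the shifts divisible by $q$, and one deletes the few values mishandled when the recursion bottoms out. Running this over the $O(\log C)$ scales produces the $\log_{2}C$ factor in $t$, while the crude density loss incurred at each scale produces the factor $C^{16/C}$.

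Two sources then feed the defective set $E$. First are the local exceptional values lifted to $[n]$: an element is locally defective in a given digit only with small probability, and because a $3$-AP is governed by the single leading digit of its difference, these contributions can be summed over the $d$ digits. Second are the carries, which I must control by discarding the elements for which $a+c=2b$ produces a carry into or out of the leading digit $j$, since such carries destroy the digitwise picture above. The plan is to show that both contributions together occupy at most a $\exp(-\Theta(d/C^{6}))=n^{\alpha-1}$ fraction of $[n]$, so that $|E|\le n^{\alpha}$ and every $3$-AP inside $A=[n]\setminus E$ is rainbow.

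The main obstacle is not any single estimate but a genuine obstruction that recurs at every scale: coloring a whole interval $[m]$ so that all $3$-APs are rainbow needs at least $m/2$ colors, because any two equal-colored elements of the same parity have their midpoint inside $[m]$. The only escape is deletion, and the real difficulty is to arrange the deletions at all $d$ digit-scales and all $O(\log C)$ $q$-adic sub-scales to be \emph{simultaneously} sparse, of total size at most $n^{\alpha}$, while keeping the color count at $n^{\beta}$ with $\beta=o(1)$. Verifying that the carries do not break the scale alignment on which the product coloring depends, and then tuning the crude local construction to land exactly on $\alpha=1-\tfrac{1}{18C^{6}\log C}$ and $\beta=\log_{C}(10C^{16/C}\log_{2}C)$, is where essentially all of the technical effort will go.
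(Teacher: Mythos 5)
First, a point of reference: the paper you are working from does not prove Theorem~\ref{thm:pach} at all --- it is quoted from Pach and Tomon \cite{2019PachRainbowAP} --- so your proposal must be judged against what any correct proof of these bounds has to accomplish. Judged that way, it has a fatal gap in the deletion accounting. The theorem requires $|[n]\setminus A|\le n^{\alpha}$, i.e.\ a deleted \emph{fraction} of at most $n^{\alpha-1}=C^{-d/(18C^{6}\log C)}=2^{-\Theta(d/C^{6})}$, which is \emph{exponentially small in $d$}. Your set $E$ is a union, over the $d$ digit positions, of local events (``this digit takes a locally defective value''), each of fixed probability $p$ depending only on $C$; this gives $|E|/n\approx 1-(1-p)^{d}$, which is nondecreasing in $d$ and tends to $1$, and even your own ``sum over the $d$ digits'' bound gives $dp$, which grows linearly in $d$. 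No local parameters (which cannot depend on $d$) make such a quantity stay below $2^{-\Theta(d/C^{6})}$. Worse, $p$ cannot be made small in your scheme, because the local problem you reduce to must in fact be solved \emph{exactly}, not just for most pairs $(u,\delta)$: a $3$-AP $a,\ a+\delta C^{j},\ a+2\delta C^{j}$ with $a_{j}+2\delta\le C-1$ involves no carries, so its three terms agree in every digit except the $j$-th, and under a fixed coordinatewise coloring $\psi$ its rainbowness rests entirely on $\psi(a_{j}),\psi(a_{j}+\delta),\psi(a_{j}+2\delta)$ being pairwise distinct. Any non-rainbow local triple of undeleted digit values therefore lifts (by filling in non-defective digits elsewhere) to a non-rainbow $3$-AP lying entirely inside $A$. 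Hence every local $3$-AP among undeleted values must be rainbow, and by the very midpoint/parity lower bound you cite, the longest run of undeleted values forces $t\ge\frac{C-|D|}{2(|D|+1)}$, where $D$ is the locally deleted set; with $t\approx 10\,C^{16/C}\log_{2}C$ this forces $|D|=\Omega(C/\log C)$, so $p=\Omega(1/\log C)$ and $|E|\ge\bigl(1-e^{-\Omega(d/\log C)}\bigr)n$, i.e.\ $A=o(n)$ instead of $|A|\ge n-n^{\alpha}$. The same objection applies to your carry deletions, which are again a union over positions of constant-probability digit-pattern events.

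There is also an arithmetic error in the local construction itself: recursing ``color by $u\bmod q$, then recurse on shifts divisible by $q$'' \emph{multiplies} the color count by $q$ at each of the $\log_{q}C$ levels, giving $q^{\log_{q}C}=C$ colors, not $O(q\log C)$; truncating the recursion does not help, since untreated shifts again force either many colors or constant-density local deletion. What a correct proof with these exponents must do --- and what the cited construction does --- is abandon both of your localizations: the color is not a fixed per-digit map, and the deleted set is not a union of per-digit events. Generic digits receive only coarse, $O(\log_{2}C)$-valued colors; the rare exceptional digit patterns (those that would defeat the coarse colors or allow carries to propagate) have their positions flagged and their exact contents written into the color, which multiplies the per-digit color count only by roughly $2\,C^{16/C}$ on average --- this is precisely where the shape $10\,C^{16/C}\log_{2}C$ of $\beta$ comes from; and the only elements deleted are those containing \emph{atypically many} exceptional patterns. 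The deletion is thus a single large-deviation (Chernoff-type) event over the whole digit string, which is the only way to obtain a deleted fraction $2^{-\Theta(d/C^{6})}$. That mechanism --- paying for exceptions inside the color rather than by deletion, and deleting only on large deviations --- is the idea your proposal is missing, and without it your route cannot reach the stated $\alpha$ and $\beta$.
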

 For convenience, we call set $A$ an $(\alpha,\beta)$-only rainbow $3$-APs set if $A$ satisfies the properties in Theorem \ref{thm:pach}. Moreover, Theorem \ref{thm:pach} can be extended to longer arithmetic progressions easily. For more details, we refer the readers to Concluding remarks in\cite{2019PachRainbowAP}.

\section{Generalized rainbow framework}\label{framework}
We will describe a unified framework for coded caching schemes with uncoded placement. Let $\mathcal{A}$ and $\mathcal{B}$ be two collections with $K=|\mathcal{A}|$ and $F=|\mathcal{B}|.$ Let $\mathbb{\biguplus}$ be a certain operation, for example, the operation $\mathbb{\biguplus}$ can be a simple addition operation or a set union operation. Then we define the set $\mathcal{C}$ as
\begin{equation*}
 \mathcal{C}=\mathcal{A}\biguplus\mathcal{B}=\{a\biguplus b:a\in\mathcal{A},b\in\mathcal{B}\}.
\end{equation*}
 For a carefully selected property $\sigma$, we call any subset in $\mathcal{C}$ with property $\sigma$ as a $\sigma$-type structure.
 Consider a subset $\hat{\mathcal{C}}\subseteq\mathcal{C}$ and color every element in $\hat{\mathcal{C}}$ by a coloring function $\phi$.
 Call a subset $\hat{\mathcal{C}}\subseteq \mathcal{C}$ only rainbow $\sigma$-type set if every $\sigma$-type structure in $\hat{\mathcal{C}}$ is rainbow under the coloring function. For example, when $\mathcal{A}=\mathcal{B}=[m]$ and the operation is addition, the $\sigma$-type structure can be $3$-term arithmetic progressions.

\begin{dfn}\label{def:generalizedScheme}
  Let $K=|\mathcal{A}|$ and $F=|\mathcal{B}|,$ define $\mathcal{C}=\mathcal{A}\biguplus\mathcal{B}.$ Suppose there is a $\sigma$-type structure and a suitable subset $\hat{\mathcal{C}}\subseteq \mathcal{C}$ such that we can find a coloring function $\phi$ which makes every $\sigma$ structure rainbow in $\hat{\mathcal{C}},$ and we are able to construct a coloring function $\Phi$ over all the pairs $(a,b)$ with $a\biguplus b\in\hat{\mathcal{C}}$ based on $\phi$.
  Then we describe the placement phase and delivery phase with assistance of the above colored subset $\hat{\mathcal{C}}.$
  \begin{enumerate}
    \item \textbf{Placement phase:} For uncolored elements $a\biguplus b$ in $\mathcal{A}\biguplus \mathcal{B},$ user $a$ caches the $b$-th packet of all files in the library.
    \item \textbf{Delivery phase:} The delivery is based on the coloring function $\Phi$. Suppose there are $s$ elements~
$(a_{1},b_{1}),(a_{2},b_{2}),\ldots,$ $(a_{s}, b_{s})$ receiving the same color $c$ from $\Phi$, then we denote the following XOR multiplexing of packets as $W_c$
          \begin{equation*}
            W_c=\bigoplus\limits_{1\leqslant i\leqslant s} W_{d_{a_{i}}}^{(b_{i})}.
          \end{equation*}
          For each uncached pair $(a,b)$, define a constant
          \begin{equation*}
            m(a,b)=\#\{\Phi(a', b'):\ a'=a\ or\ b'\biguplus a\in\hat{\mathcal{C}}\},
          \end{equation*}
          and $m=\max\{\{m(a,b):a\cup b\in\hat{\mathcal{C}}\}\cup\{|\Phi|-1\}\}$.
          Let $P$ be an $m\times|\Phi|$ maximum distance separable (MDS) matrix.
          During the delivery phase, the server sends
          \begin{equation*}
            P\cdot(W_{c_1}, W_{c_2}, \cdots, W_{c_{|\Phi|}})^T.
          \end{equation*}
          Thus, the delivery phase consists of $m$ packet transmissions.
  \end{enumerate}
\end{dfn}

We present an example as follows.
\begin{example}\label{egcaching}
  Let $\mathcal{A}=[4]$, $\mathcal{B}=\{12,23,34,41\}$ and the operation be set union operation $\cup$. Therefore, $\mathcal{C}=\mathcal{A}\cup\mathcal{B}$ and assume $\hat{\mathcal{C}}=\{123,124,134,234\}.$  The coloring function $\phi$ on $\hat{\mathcal{C}}$ satisfies the rainbow property such that any three elements in $\hat{\mathcal{C}}$ receive distinct colors, which implies there are exactly $4$ colors. Using $\phi$, we can define the coloring function $\Phi$ over all the pairs $(a,b)$ such that $\Phi(a,b)=\phi(a\cup b)$.

  During the delivery phase, we need to count a special number before sending messages. For any pair $(a,b)\in\mathcal{A}\times\mathcal{B}$ such that $a\cup b\in\hat{\mathcal{C}}$, define $m(a,b)=\#\{\Phi(a',b'):\ a'=a\ or\ b'\cup a\in\hat{\mathcal{C}}\}$ and $m=\max_{\{(a,b):a\cup b\in\hat{\mathcal{C}}\}}m(a,b)$. In this example, $m=3$. Next, find a $3\times 4$ MDS array such as
  \begin{equation*}
      P=\begin{pmatrix}
      1 & 0 & 0 & 1 \\
      0 & 1 & 0 & 1 \\
      0 & 0 & 1 & 1
      \end{pmatrix},
  \end{equation*}
  and define the sum of subfiles corresponding to the same color $c_i$ as
  \begin{equation*}
      W_{c_i}=\bigoplus\limits_{\Phi(A_i\cup B_i)=c_i}W_{d_{a_i}}^{(b_i)}.
  \end{equation*}
  Then we send
  \begin{equation*}
      P\cdot (W_{c_1}, W_{c_2}, W_{c_3}, W_{c_4})^T.
  \end{equation*}
  More precisely, suppose $\phi(123)=c_1$, $\phi(124)=c_2$, $\phi(134)=c_3$, $\phi(234)=c_4$, then we send
  \begin{align*}
      & W_{d_1}^{(23)}\bigoplus W_{d_3}^{(12)}\bigoplus W_{d_2}^{(34)}\bigoplus W_{d_4}^{(23)}, \\
      & W_{d_2}^{(14)}\bigoplus W_{d_4}^{(12)}\bigoplus W_{d_2}^{(34)}\bigoplus W_{d_4}^{(23)}, \\
      & W_{d_1}^{(34)}\bigoplus W_{d_3}^{(14)}\bigoplus W_{d_2}^{(34)}\bigoplus W_{d_4}^{(23)}.
  \end{align*}
  We can check that each user can decode all the subfiles he requires from the transmission.
\end{example}

\begin{rmk}\label{rmk:Generalized}
  Given user set $\mathcal{A}$ and packets set $\mathcal{B},$ the first task in our generalized rainbow framework is to define the suitable operation $\biguplus,$ and then the most important thing is to give the appropriate $\sigma$-type structure. Then it turns to be a combinatorial problem, that is, we should design a coloring function to make sure every $\sigma$-type structure in $\hat{\mathcal{C}}$ is rainbow, with colors as few as possible. Trivially, we can color every element of $\hat{\mathcal{C}}$ using different colors, which implies that arbitrary structure in $\hat{\mathcal{C}}$ is rainbow.
\end{rmk}

\begin{rmk}
  In the delivery phase, we define a constant $m$ and find an $m\times |\Phi|$ MDS array. To make sure that such an array exists over $\mathbb{F}_2$ (which is always considered in coded caching problem), $m$ cannot be smaller than $|\Phi|-1$. However, if we assume that the computation between subfiles is over $\mathbb{F}_q$ for sufficiently large $q$, then we don't need $m\geq |\Phi|-1$, which means we can further reduce the transmission rate.
\end{rmk}

The main idea of the delivery scheme comes from the famous index coding problem.
The index coding problem can be described by a directed graph $G_d$ with $n$ vertexes, in which every vertex $x_i$ represents a user who requires $x_i$, and every directed edge from $x_i$ to $x_j$ means user $i$ has $x_j$ as side information. The index coding problem asks how many bits the server needs to broadcast such that each user can get the file he requires. For such a problem, there is a scheme with transmission rate $R=\chi_l(\bar{G}_d)$ in \cite{2013shanmugamlocal}, where $\chi_l(\bar{G}_d)$ is the local chromatic number of the complementary graph $\bar{G}_d$. In the directed graph $\bar{G}_d$, denote the closed out-neighborhood of a given vertex i as
\begin{equation*}
  N^{+}(i)=\{j\in V(\bar{G}_d):(i,j)\in E(\bar{G}_d)\}\cup\{i\}.
\end{equation*}
\begin{dfn}[\!\!\cite{2013shanmugamlocal}]
  The local chromatic number of a directed graph $\bar{G}_d$ is the maximum number of colors in any out-neighborhood minimized over all proper colorings of the undirected graph obtained by ignoring the orientation of the edges in $\bar{G}_d$, i.e.
  \begin{equation*}
    \chi_l(\bar{G}_d)=\min\limits_{c}\max\limits_{i\in V}|c(N^+(i))|.
  \end{equation*}
\end{dfn}

\begin{rmk}
  From above definition, it is obvious that the constant $m$ in the delivery scheme is the local chromatic number of the corresponding index coding problem.
\end{rmk}

Fix $a\in\mathcal{A}$, denote $Z_a$ as the number of $b\in\mathcal{B}$ such that $a\biguplus b\in\hat{\mathcal{C}}$. In this paper, we only consider that each user has the same cache size, i.e. $Z_a=Z$ for all $a\in\mathcal{A}$.
From the perspective of index coding, by selecting the $\sigma$-type carefully, we can get a proper coloring of $\bar{G}_d$. Therefore, we have the following result.

\begin{thm}
  The generalized rainbow framework provides a coded caching scheme with $(K=|\mathcal{A}|,F=|\mathcal{B}|,\frac{M}{N}=1-\frac{Z}{F},R=\frac{m}{|\mathcal{B}|})$.
\end{thm}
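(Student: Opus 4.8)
The plan is to verify the four listed parameters in turn, treating the first three as bookkeeping and reserving the real work for the decodability assertion hidden in the claim that this construction \emph{is} a coded caching scheme.

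First I would record the definitional parameters. Each $a\in\mathcal{A}$ indexes a user, so $K=|\mathcal{A}|$, and each file is split into the $|\mathcal{B}|$ packets indexed by $b\in\mathcal{B}$, so the subpacketization is $F=|\mathcal{B}|$. For the cache ratio I would count placement: by the placement rule user $a$ caches the $b$-th packet of every file exactly when $a\biguplus b\notin\hat{\mathcal{C}}$, and since by hypothesis there are exactly $Z_a=Z$ values of $b$ with $a\biguplus b\in\hat{\mathcal{C}}$, user $a$ stores $F-Z$ of the $F$ packets of each of the $N$ files. A packet is a $1/F$ fraction of a file, so $M=(F-Z)N/F$ and hence $M/N=1-Z/F$. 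For the rate, each super-symbol $W_c$ is an XOR of packets and so has the size of one packet, a $1/F$ fraction of a file; the server transmits the $m$ coordinates of $P\cdot(W_{c_1},\ldots,W_{c_{|\Phi|}})^T$, so $|X_{\mathbf{d}}|$ amounts to $m$ packets and $R=m/F=m/|\mathcal{B}|$.

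The substantive step is decodability, which I would phrase as an index coding problem and solve with the local-chromatic-number scheme of \cite{2013shanmugamlocal}. Take the vertices of $\bar{G}_d$ to be the pairs $(a,b)$ with $a\biguplus b\in\hat{\mathcal{C}}$ (the packets user $a$ must still receive), and declare user $a$ to hold $W_{d_{a'}}^{(b')}$ as side information precisely when $a\biguplus b'\notin\hat{\mathcal{C}}$. Under this dictionary the closed out-neighborhood $N^{+}((a,b))$ consists of $(a,b)$ together with all message-pairs $(a',b')$ that user $a$ lacks, i.e.\ those with $a'=a$ or $a\biguplus b'\in\hat{\mathcal{C}}$; comparing with the definition of $m(a,b)$ shows $m(a,b)=|\Phi(N^{+}((a,b)))|$, and this quantity depends only on $a$. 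Thus $m$ is, up to the $|\Phi|-1$ correction that guarantees an MDS matrix over $\mathbb{F}_2$, exactly the local chromatic number $\chi_l(\bar{G}_d)$. The key fact to invoke here is that $\Phi$, being induced from a coloring $\phi$ for which every $\sigma$-type structure in $\hat{\mathcal{C}}$ is rainbow, is a proper coloring of the underlying undirected graph of $\bar{G}_d$: two message-pairs sharing a $\Phi$-color lie on no common $\sigma$-type structure, whence each is cached by the other user and the two vertices are non-adjacent.

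Given these facts the decoding is standard. Any color $c$ absent from $\Phi(N^{+}((a,b)))$ labels only packets user $a$ already caches, so user $a$ forms $W_c$ outright; the remaining at most $m$ super-symbols are recovered from the $m$ received MDS combinations, because any $m$ columns of the $m\times|\Phi|$ matrix $P$ are independent and hence so is any smaller set, making the residual linear system solvable. In particular user $a$ reconstructs $W_{\Phi(a,b)}$, and since the proper-coloring property forces every other packet in that XOR to be side information of user $a$, it cancels them to extract $W_{d_a}^{(b)}$; ranging over all $b$ with $a\biguplus b\in\hat{\mathcal{C}}$ recovers the whole demanded file. I expect the main obstacle to be the proper-coloring claim itself, namely translating ``every $\sigma$-type structure is rainbow'' into ``same-color message-pairs are mutual side information,'' since this is the only place where the combinatorial choice of $\biguplus$ and $\sigma$ actually enters and it must be argued at the level of the abstract framework rather than a single instance.
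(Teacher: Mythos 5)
Your proposal is correct and follows exactly the route the paper intends: the paper's own ``proof'' is a one-line deferral to the preceding analysis (the placement/delivery bookkeeping, the index-coding formulation, the identification of $m$ with the local chromatic number of $\bar{G}_d$, and the MDS-based delivery), and your argument is precisely that analysis written out in full, including the correct observation that the only substantive hypothesis is that the rainbow property of $\phi$ makes $\Phi$ a proper coloring of the interference graph --- which the paper likewise treats as the ``careful selection'' of the $\sigma$-type structure rather than something proved in the abstract. No gap beyond what the paper itself leaves implicit.
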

\begin{pf}
  The proof is directly from the above analysis.
\end{pf}

\section{Existing scheme under rainbow framework}\label{existingscheme}
In this section, we will highlight several existing works on coded caching schemes via different ideas and combinatorial objects. However, as we have discussed in Remark \ref{rmk:Generalized}, in our generalized rainbow framework, many existing schemes are equipped with the trivial coloring function. Hence we represent some of them and discuss the way to improve the existing schemes.

\subsection{All PDA schemes are rainbow schemes}
For a bipartite graph, a strong edge coloring function is a coloring function such that any two edges which have the same color are not adjacent and can not be connected by another edge.
The relationship between PDA schemes and strong edge coloring of a bipartite graph is studied in \cite{2018SEC}. The following result is known.
\begin{thm}[\!\cite{2018SEC}]
  Any $F\times K$ array $P$ is a PDA if and only if its corresponding edge colored bipartite graph $G(\mathcal{F}\cup\mathcal{K},\mathcal{E})$ satisfies
  \begin{enumerate}
    \item the vertex in $\mathcal{K}$ has a constant degree;
    \item the corresponding coloring is a strong edge coloring.
  \end{enumerate}
\end{thm}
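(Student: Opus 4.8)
The plan is to prove a biconditional, so I would establish first the dictionary translating between a $F \times K$ array $P$ and its edge-colored bipartite graph $G(\mathcal{F} \cup \mathcal{K}, \mathcal{E})$. The natural construction is: let $\mathcal{F}$ be the $F$ rows and $\mathcal{K}$ be the $K$ columns, and put an edge between row $f$ and column $k$ exactly when the entry $P_{f,k}$ is \emph{not} the star symbol (i.e.\ it carries an integer). Color that edge by the integer occupying cell $(f,k)$. Under this dictionary, the two defining conditions of a PDA — each column contains the same number $Z$ of stars (constant degree after complementation), and the condition that any two cells sharing an integer symbol must sit in distinct rows, distinct columns, and have a star in the two "opposite corners" — should match condition (1) and the strong-edge-coloring property of condition (2). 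Before proving either direction I would write out both PDA axioms and the graph-theoretic axioms side by side so the translation is unambiguous.

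**The two directions.**

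For the forward direction ($P$ a PDA $\Rightarrow$ conditions hold), I would take the PDA axioms one at a time. The constant-number-of-stars-per-column axiom gives each vertex in $\mathcal{K}$ a constant degree in $G$, which is condition (1). The more substantial step is condition (2): I would show that the PDA's "$C_3$-property" (if $P_{f_1,k_1}=P_{f_2,k_2}=s$ is an integer with $f_1\neq f_2$, $k_1\neq k_2$, then $P_{f_1,k_2}=P_{f_2,k_1}=\ast$) is exactly the statement that two edges of the same color are non-adjacent and are not joined by a third edge. Two same-colored edges $(f_1,k_1)$ and $(f_2,k_2)$ are non-adjacent precisely because $f_1\neq f_2$ and $k_1\neq k_2$; they are not connected by another edge precisely because the opposite-corner cells $(f_1,k_2)$ and $(f_2,k_1)$ are stars, hence carry \emph{no} edge. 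This is the definition of a strong edge coloring.

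**The converse and the main obstacle.**

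For the converse, I would run the same equivalences backward: condition (1) restores the constant-star count per column, and the strong-edge-coloring property restores the $C_3$-property, giving back all PDA axioms. The step I expect to be the main obstacle is pinning down the strong-edge-coloring definition so that it encodes \emph{both} parts of the PDA condition at once — namely that same-colored cells lie in distinct rows and columns \emph{and} that the opposite corners are stars. The phrase in the definition, "any two edges which have the same color are not adjacent and can not be connected by another edge," must be parsed carefully: "not adjacent" yields the distinct-row/distinct-column requirement, while "cannot be connected by another edge" must be read as the statement that there is no edge on either opposite-corner cell, which is exactly where the stars come from. I would make sure the "cannot be connected by another edge" clause is interpreted as forbidding a \emph{third} edge incident to an endpoint of each of the two same-colored edges that would close a short path between them; getting this interpretation exactly right, and checking it is genuinely symmetric across the two opposite corners, is the delicate point. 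Once that equivalence is nailed down, both implications follow by the clean symbol-by-symbol translation, and I would appeal to the already-cited result from \cite{2018SEC} for any remaining bookkeeping.
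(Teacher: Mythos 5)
The paper itself gives no proof of this statement: it is quoted verbatim as a known result from \cite{2018SEC}, so there is no in-paper argument to compare yours against. Judged on its own merits, your proposal is correct and is exactly the standard dictionary argument that establishes this equivalence: rows of $P$ become the part $\mathcal{F}$, columns become $\mathcal{K}$, an edge $(f,k)$ exists precisely when $P_{f,k}$ is an integer, and that integer is the edge's color. Under this translation, the constant-number-of-stars-per-column axiom is precisely condition (1) (each vertex of $\mathcal{K}$ has degree $F-Z$), and the PDA condition on repeated integers (distinct rows, distinct columns, stars in the opposite corners) is precisely the statement that each color class is an induced matching, i.e.\ a strong edge coloring; both directions then follow by reading the same equivalences forward and backward. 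One remark: the step you flag as the main obstacle --- checking that ``cannot be connected by another edge'' encodes exactly the two opposite-corner stars --- is in fact automatic in the bipartite setting, because for two vertex-disjoint edges $(f_1,k_1)$ and $(f_2,k_2)$ the only conceivable connecting edges are $(f_1,k_2)$ and $(f_2,k_1)$; edges inside $\mathcal{F}$ or inside $\mathcal{K}$ do not exist, so no symmetry check is needed. Also, your closing appeal to \cite{2018SEC} ``for any remaining bookkeeping'' should be deleted, since that citation \emph{is} the statement being proved and invoking it would be circular; your translation argument is already complete without it (the only axiom not mentioned, that every integer occurs at least once, is vacuous in the graph picture because colors are by definition carried by edges).
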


Given a bipartite graph $G$ with vertex set $\mathcal{F}\cup\mathcal{K}$ and edge set $\mathcal{E}$. Then, in our new framework, let $\mathcal{A}=\mathcal{K}$, $\mathcal{B}=\mathcal{F}$ and the operation be the Cartesian product, i.e.
\begin{equation*}
  \mathcal{C}=\mathcal{A}\times\mathcal{B}=\{(a,b): a\in\mathcal{A},b\in\mathcal{B}\}.
\end{equation*}
Note that $\mathcal{C}$ is the set of all possible edges in $G$. Choose $\hat{\mathcal{C}}=\mathcal{E}$, and define a coloring function $\phi$ on $\hat{\mathcal{C}}$ such that
\begin{enumerate}
  \item If $(a_1,b_1),(a_2,b_2)\in\hat{\mathcal{C}}$ and $a_1=a_2$ or $b_1=b_2$ then $\phi((a_1,b_1))\neq\phi((a_2,b_2)).$
  \item If $(a_1,b_1),(a_2,b_2),(a_3,b_3)\in\hat{\mathcal{C}}$ and $|\{a_1,a_2,a_3\}|\leq2$, $|\{b_1,b_2,b_3\}|\leq2$ then $(a_1,b_1),(a_2,b_2)$ and $(a_3,b_3)$ are rainbow.
\end{enumerate}

In fact, any strong edge coloring function satisfies above conditions. It is easy to check that the first condition means that any two edges with the same color are not adjacent and the second condition is equivalent to that any two edges with the same color cannot be connected by another edge. Then we can select the coloring function $\Phi$ in the rainbow framework exactly equal to $\phi$. If we do not consider the constant $m$ and send subfiles according to their colors directly, we obtain a PDA scheme. From above analysis, we conclude the following theorem.
\begin{thm}
  Any PDA scheme can be represented by a rainbow scheme under the unified framework.
\end{thm}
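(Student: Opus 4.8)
The plan is to take an arbitrary PDA, pass through its bipartite-graph description, and then verify that the strong edge coloring it already carries is exactly an only rainbow $\sigma$-type coloring for the framework instantiated as in the paragraph preceding the statement. Since the construction itself is laid out there ($\mathcal{A}=\mathcal{K}$, $\mathcal{B}=\mathcal{F}$, the operation the Cartesian product, and $\hat{\mathcal{C}}=\mathcal{E}$), the real content is a verification in two parts: that the coloring meets the rainbow requirements, and that the resulting placement and delivery reproduce the original PDA scheme.

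First I would invoke the characterization of \cite{2018SEC}, which says every $F\times K$ PDA corresponds to an edge-colored bipartite graph $G(\mathcal{F}\cup\mathcal{K},\mathcal{E})$ in which each vertex of $\mathcal{K}$ has constant degree and the edge coloring is a strong edge coloring. I would then declare the $\sigma$-type structure to be any triple of edges $(a_1,b_1),(a_2,b_2),(a_3,b_3)\in\hat{\mathcal{C}}$ with $|\{a_1,a_2,a_3\}|\leq 2$ and $|\{b_1,b_2,b_3\}|\leq 2$, set $\phi$ equal to the given strong edge coloring, and take $\Phi=\phi$ as the coloring on pairs.

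The heart of the argument is to show that $\phi$ makes every $\sigma$-type structure rainbow, i.e. that conditions (1) and (2) hold. Condition (1) is immediate, since a strong edge coloring assigns distinct colors to adjacent edges, which is precisely the statement that two edges sharing an $a$-coordinate or a $b$-coordinate receive different colors. For condition (2) I would argue on the at most four vertices involved: any three distinct edges with $\leq 2$ distinct first coordinates and $\leq 2$ distinct second coordinates must in fact use exactly two values $\{a,a'\}$ and two values $\{b,b'\}$ and hence lie among the four cells $(a,b),(a,b'),(a',b),(a',b')$. If two of the three chosen edges shared a color, they would either be adjacent, contradicting condition (1) just established, or be the two diagonal edges, say $(a,b)$ and $(a',b')$; but then the strong-coloring requirement that two same-colored edges cannot be joined by a third edge forbids both $(a,b')$ and $(a',b)$ from lying in $\hat{\mathcal{C}}$, leaving no cell for the third edge of the triple. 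Either way we reach a contradiction, so all three colors are distinct and the triple is rainbow.

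Finally I would match the two schemes. An uncolored pair $(a,b)\notin\hat{\mathcal{C}}$ is exactly a non-edge of $G$, that is, a ``$*$'' entry of the PDA, so the placement rule that user $a$ caches packet $b$ reproduces precisely the cached subfiles of the PDA. For delivery I would bypass the MDS optimization and broadcast each color class of $\Phi$ directly as its XOR; since a color class is exactly the set of integer entries of the PDA sharing a common value, this is verbatim the PDA delivery, and the number of transmissions equals the number of colors, so the rate coincides. I expect the only genuinely delicate step to be the four-cell case analysis for condition (2); once the correspondence ``non-edge $\leftrightarrow$ $*$, color class $\leftrightarrow$ XOR'' is fixed, the remaining matchings are bookkeeping.
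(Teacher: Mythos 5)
Your proposal is correct and follows essentially the same route as the paper: invoke the PDA--strong-edge-coloring equivalence of \cite{2018SEC}, instantiate the framework with $\mathcal{A}=\mathcal{K}$, $\mathcal{B}=\mathcal{F}$, Cartesian product, $\hat{\mathcal{C}}=\mathcal{E}$ and $\Phi=\phi$, and observe that delivering each color class directly (omitting the MDS step) reproduces the PDA scheme. Your four-cell case analysis for the triple condition simply spells out what the paper dismisses as ``easy to check,'' so it is a welcome elaboration rather than a different argument.
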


\begin{rmk}
  In the delivery phase, if we take $m$ into consideration, then the transmission rate can be further reduced since $m\leq|\Phi|$.
\end{rmk}

\subsection{Construction from the union of disjoint subsets}
The first construction from \cite{2018ShangguanIT} regards users and packets as disjoint subsets of the ground set, respectively. More precisely, they set $$\bigg(K,F,\frac{M}{N},R\bigg)=\bigg(\binom{n}{a},\binom{n}{b},\frac{\binom{n}{b}-\binom{n-a}{b}}{\binom{n}{b}},\frac{\binom{n}{a+b}}{\binom{n}{b}}\bigg).$$
In particular, for $a=2,$ $n=\lambda a$ for constant $\lambda>1,$ this construction achieves $R=\lambda^{2}$ with $F=O(K^{-\frac{1}{4}}\cdot 2^{\sqrt{2K}H(\lambda^{-1})}),$ where $H(x)=-x\log_{2}{x}-(1-x)\log_{2}{(1-x)}$ for $0<x<1$ is the binary entropy function. Moreover, it is easy to check $R$ and $\frac{M}{N}$ are both constant and $F$ grows sub-exponentially with $K$ under such parameters.
This can actually be achieved by subset version of rainbow schemes as follows.
\begin{dfn}
 Let $\mathcal{A}=\binom{[n]}{a}$ be a collection of all $a$-element subsets of $[n]$ and $\mathcal{B}=\binom{[n]}{b}$ be a collection of all $b$-element subsets of $[n].$ Set the operation $\biguplus$ as set union operation $\bigcup.$ Suppose that $a$ and $b$ are positive integers with $a<b,$ and $n$ is large enough, then it is easy to see that
 \begin{equation*}
   \mathcal{C}=\mathcal{A}\biguplus\mathcal{B}=\binom{[n]}{b}\bigcup\binom{[n]}{b+1}\bigcup\cdots\bigcup\binom{[n]}{a+b}.
 \end{equation*}
 Let $\hat{\mathcal{C}}\subseteq\mathcal{C}$ be the collection of all $(a+b)$-element subsets of $[n],$ i.e. $\hat{\mathcal{C}}=\binom{[n]}{a+b}.$ It is easy to see $|\hat{\mathcal{C}}|=|\mathcal{C}|-o(|\mathcal{C}|).$
 We then color every element in $\hat{\mathcal{C}}$ using the proper coloring function $\Phi,$ and leave the elements in $\mathcal{C}\setminus\hat{\mathcal{C}}$ uncolored.
 \begin{enumerate}
    \item \textbf{Placement phase:} For uncolored elements $A\bigcup B$ in $\mathcal{A}\biguplus \mathcal{B},$ user $A$ caches the $B$-th packet of all files in the library.
    \item \textbf{Delivery phase:} The delivery is based on the coloring function $\Phi$. Suppose there are $s$ elements
    $A_{1}\bigcup B_{1},A_{2}\bigcup B_{2},\ldots,$
    $A_{s}\bigcup B_{s}$ receiving the same color from $\Phi$, then the server broadcasts the following XOR multiplexing of packets
          \begin{equation*}
            \bigoplus\limits_{1\leqslant i\leqslant s} W_{d_{A_{i}}}^{(B_{i})}.
          \end{equation*}
          The delivery phase consists of $|\Phi|$ packet transmissions, where $|\Phi|$ is the number of colors in $\Phi$.
 \end{enumerate}
\end{dfn}
It remains to discuss the properties of the proper coloring function $\Phi$. Let $A_{1},A_{2}\in\mathcal{A}$ and $B_{1},B_{2}\in\mathcal{B}$. Using the proof of Theorem \ref{thm:rainbow}, neither of the following will happen.
\begin{itemize}
  \item $A_{1}\bigcup B_{1}$ and $A_{1}\bigcup B_{2}$ cannot receive the same color from $\Phi.$ Also, $A_{2}\bigcup B_{1}$ and $A_{2}\bigcup B_{2}$ cannot receive the same color from $\Phi.$
  \item If $A_{1}\bigcup B_{1}$ and $A_{2}\bigcup B_{2}$ receive the same color from $\Phi,$ then both of $A_{1}\bigcup B_{2}$ and $A_{2}\bigcup B_{1}$ are uncolored.
\end{itemize}
To satisfy the above conditions, we can design the coloring functions with the following properties.
\begin{lem}
  Let $\Phi$ be the coloring function of elements in $\hat{\mathcal{C}}=\binom{[n]}{a+b}$ with the following properties:
  \begin{itemize}
    \item If $C_{1},C_{2}\in\hat{\mathcal{C}}$ and $|C_{1}\bigcap C_{2}|\geqslant a,$ then $\Phi(C_{1})\neq\Phi(C_{2}).$
    \item If $C_{1},C_{2},C_{3}\in\hat{\mathcal{C}}$ and $C_{i}\subseteq C_{j}\bigcup C_{k}$ for $i\neq j\neq k,$ then $C_{1},C_{2}$ and $C_{3}$ receive distinct colors.
  \end{itemize}
 Then $\Phi$ can be used in the above set system version of rainbow scheme.
\end{lem}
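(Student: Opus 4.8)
The plan is to verify that the two coloring properties assumed in the lemma imply the two decodability conditions listed just before it, since those conditions are exactly what guarantee that each user recovers every packet it lacks from the XOR transmissions. The first observation to record is that $A\cup B\in\hat{\mathcal{C}}$, i.e.\ $A\cup B$ is colored, precisely when $A\cap B=\emptyset$ (so that $|A\cup B|=a+b$); consequently user $A$ caches packet $B$ when $A\cup B$ is uncolored and lacks it exactly when $A\cup B$ is colored. Thus a single color class XORs together packets $W_{d_{A_i}}^{(B_i)}$ with $A_i\cap B_i=\emptyset$, and user $A_1$ can strip off every unwanted term in its class provided each other pair $(A_i,B_i)$ in that class satisfies $A_1\cap B_i\neq\emptyset$, i.e.\ $A_1\cup B_i$ is uncolored. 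The two bullet conditions encode exactly these requirements, so it suffices to derive them from the lemma's hypotheses.

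For the first condition I would fix a user $A_1$ together with two distinct packets $B_1,B_2$ that it lacks, so $A_1\cap B_1=A_1\cap B_2=\emptyset$ and both $C_1:=A_1\cup B_1$ and $C_1':=A_1\cup B_2$ lie in $\hat{\mathcal{C}}$. Since $A_1\subseteq C_1\cap C_1'$ we have $|C_1\cap C_1'|\geq a$, so the first property of the lemma immediately yields $\Phi(C_1)\neq\Phi(C_1')$; the claim for $A_2$ is identical. This rules out one user needing two distinct packets out of the same XOR.

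For the second condition I would assume $\Phi(A_1\cup B_1)=\Phi(A_2\cup B_2)$ and show that $A_1\cup B_2$ (and symmetrically $A_2\cup B_1$) must be uncolored. Suppose instead $A_1\cap B_2=\emptyset$, so that $C_3:=A_1\cup B_2\in\hat{\mathcal{C}}$. The key inclusion is $C_3=A_1\cup B_2\subseteq(A_1\cup B_1)\cup(A_2\cup B_2)=C_1\cup C_2$, which places $C_1,C_2,C_3$ in the configuration governed by the second property of the lemma. That property forces $C_1,C_2,C_3$ to receive three distinct colors, contradicting $\Phi(C_1)=\Phi(C_2)$; hence $A_1\cup B_2$ must be uncolored, and the same argument applied to $C_4:=A_2\cup B_1\subseteq C_1\cup C_2$ disposes of $A_2\cup B_1$.

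The one point that needs genuine care — and where I expect the argument to be most delicate — is checking that $C_1,C_2,C_3$ are genuinely three \emph{distinct} sets, since the second property applies only to distinct elements. Here I would invoke the first property together with the hypothesis $a<b$: if $B_1=B_2$ then $B_1\subseteq C_1\cap C_2$ gives $|C_1\cap C_2|\geq b>a$, while if $A_1=A_2$ then $A_1\subseteq C_1\cap C_2$ gives $|C_1\cap C_2|\geq a$; in either degenerate case the first property forces $\Phi(C_1)\neq\Phi(C_2)$, contradicting the assumed equality of colors. Ruling out these coincidences guarantees $C_3\neq C_1$ and $C_3\neq C_2$ (and likewise for $C_4$), after which the second property applies cleanly and the decodability of the set-system scheme follows.
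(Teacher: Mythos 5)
Your overall strategy is the right one, and it matches the paper's intent (the paper leaves this lemma essentially unproved, deferring to the two bullet conditions stated just above it and to the decoding analysis in the proof of Theorem \ref{thm:rainbow}): you correctly reduce the lemma to showing that the two coloring properties imply the two decodability bullets, and your derivation of the first bullet, as well as of the second bullet when the two colored sets are distinct, is sound. However, there is a genuine gap in your distinctness step. You rule out the degenerate cases $A_1=A_2$ and $B_1=B_2$ and conclude that $C_1,C_2,C_3$ are three distinct sets, but this misses the case $A_1\neq A_2$, $B_1\neq B_2$ with $A_1\cup B_1=A_2\cup B_2$, i.e.\ two \emph{distinct pairs decomposing the same set} $C_1=C_2\in\hat{\mathcal{C}}$. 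Such pairs trivially receive the same color, so the second bullet must be verified for them, yet the second property of the lemma cannot be invoked: it constrains three distinct elements of $\hat{\mathcal{C}}$ (read literally with repeated sets it would be unsatisfiable, since identical sets cannot receive distinct colors). This is not a fringe case but the main one: for the trivial coloring that recovers the scheme of \cite{2018ShangguanIT}, every color class consists precisely of the $\binom{a+b}{a}$ decompositions of a single $(a+b)$-set, so for those classes your property-2 argument never fires and your proof establishes nothing beyond the first bullet.

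Fortunately the missing case is handled by a direct set-theoretic argument that needs no property of $\Phi$ at all. Suppose $A_1\cup B_1=A_2\cup B_2=C$ with $(A_1,B_1)\neq(A_2,B_2)$; since $A_i\cap B_i=\emptyset$ forces $B_i=C\setminus A_i$, distinctness of the pairs gives $A_1\neq A_2$, and then
\begin{equation*}
  A_1\cap B_2=A_1\cap(C\setminus A_2)=A_1\setminus A_2\neq\emptyset,
\end{equation*}
because $A_1$ and $A_2$ are distinct sets of the same size $a$. Hence $A_1\cup B_2$ has size strictly less than $a+b$ and is automatically uncolored, and symmetrically for $A_2\cup B_1$. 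Inserting this case into your argument, and keeping your property-2 argument for pairs whose unions are distinct sets of the same color, closes the gap and completes the proof.
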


\begin{rmk}
  However, there is only a trivial coloring function $\Phi$ yet, that is, we color each element in $\hat{\mathcal{C}}=\binom{[n]}{a+b}$ via different colors. This trivial coloring function achieves the scheme of \cite{2018ShangguanIT}. Hence, any proper coloring function with fewer than $\binom{n}{a+b}$ colors will improve this construction.
\end{rmk}

\subsection{Ali-Niesen scheme}
  As far as we know, Ali-Niesen scheme \cite{2014AliIT} is the first coded caching scheme that kickstarted the research of coded caching in general. Recall the parameters in Maddah Ali-Niesen scheme as
  $$\bigg(K,F,\frac{M}{N},R\bigg)=\bigg(K,\binom{K}{\frac{KM}{N}},\frac{M}{N},\frac{K(1-\frac{M}{N})}{\frac{KM}{N}+1}\bigg).$$
Using the generalized rainbow scheme, we can set $\mathcal{A}=[n]$ and $\mathcal{B}=\binom{[n]}{t},$ where $t=\frac{KM}{N}.$ Using the similar analysis, it suffices to design the coloring functions with properties as follows.
\begin{lem}
  Let $t=\frac{KM}{N}$ and $\Phi$ be the coloring function of elements in $\hat{\mathcal{C}}=\binom{[n]}{t+1}$ such that
  \begin{itemize}
    \item If $C_{1},C_{2}\in\hat{\mathcal{C}}$ and $C_{1}\bigcap C_{2}\neq\emptyset,$ then $\Phi(C_{1})\neq\Phi(C_{2}).$
    \item If $C_{1},C_{2},C_{3}\in\hat{\mathcal{C}}$ and $C_{i}\subseteq C_{j}\bigcup C_{k}$ for $i\neq j\neq k,$ then $C_{1},C_{2},C_{3}$ receive distinct colors.
  \end{itemize}
 Then $\Phi$ can be used in the Ali-Niesen type rainbow scheme.
\end{lem}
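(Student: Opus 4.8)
The plan is to verify that the two stated coloring conditions on $\Phi$ are exactly what is needed to guarantee the two forbidden configurations in the rainbow scheme (the analogues of the bullet points preceding the previous lemma), when specialized to $\mathcal{A}=[n]$, $\mathcal{B}=\binom{[n]}{t}$, operation $\bigcup$, and $\hat{\mathcal{C}}=\binom{[n]}{t+1}$. First I would record the structure of $\mathcal{C}=\mathcal{A}\biguplus\mathcal{B}$: since $\mathcal{A}$ consists of singletons $\{a\}$ and $\mathcal{B}$ consists of $t$-subsets, a union $\{a\}\cup B$ lies in $\hat{\mathcal{C}}=\binom{[n]}{t+1}$ precisely when $a\notin B$, and otherwise $|\{a\}\cup B|=t$ so the element is uncolored and governs the placement phase. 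This matches the Ali-Niesen placement: user $a$ caches packet $B$ exactly when $a\in B$, so each user caches a $\binom{K-1}{t-1}/\binom{K}{t}=M/N$ fraction, confirming the memory ratio.

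Next I would translate the two delivery requirements into conditions on $\Phi$. The first forbidden configuration says $\{a\}\cup B_1$ and $\{a\}\cup B_2$ (same user $a$, distinct packets) must not share a color. Writing $C_1=\{a\}\cup B_1$ and $C_2=\{a\}\cup B_2$ in $\binom{[n]}{t+1}$, we have $a\in C_1\cap C_2$, hence $C_1\cap C_2\neq\emptyset$; so the first listed property of $\Phi$ ($C_1\cap C_2\neq\emptyset\Rightarrow\Phi(C_1)\neq\Phi(C_2)$) immediately forbids this, which is precisely the point where $t+1$ replaces $a+b$ and the intersection threshold $a$ collapses to $1$. The second forbidden configuration concerns $C_1=\{a_1\}\cup B_1$ and $C_2=\{a_2\}\cup B_2$ receiving the same color while one of $\{a_1\}\cup B_2$, $\{a_2\}\cup B_1$ is colored; I would show this is ruled out by combining the two listed properties. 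The key observation is that if both $C_1,C_2$ are colored with the same color and, say, $\{a_1\}\cup B_2\in\hat{\mathcal{C}}$ is also colored, then setting $C_3=\{a_1\}\cup B_2$ one checks $C_3\subseteq C_1\cup C_2$ (since $a_1\in C_1$ and $B_2\subseteq C_2$), so the second property forces $C_1,C_2,C_3$ to be rainbow, contradicting $\Phi(C_1)=\Phi(C_2)$.

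Having established that the two $\Phi$-conditions imply correct decodability in the delivery phase, I would conclude that every user recovers its requested file from the color-indexed XORs, so $\Phi$ yields a valid scheme with $R=|\Phi|$ transmissions (or fewer, after the MDS reduction of Definition~\ref{def:generalizedScheme}). To recover the stated Ali-Niesen parameters I would invoke the trivial instantiation exactly as in the preceding remark: coloring each $C\in\binom{[n]}{t+1}$ so that two $(t+1)$-sets get the same color iff they are disjoint-complementary in the Ali-Niesen sense gives $|\Phi|$ equal to the number of Ali-Niesen XOR groups, namely $\binom{K}{t+1}$ multiplexings, so that $R=\binom{K}{t+1}/\binom{K}{t}=\frac{K-t}{t+1}=\frac{K(1-M/N)}{KM/N+1}$ with $F=\binom{K}{t}$, matching the displayed parameters.

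I expect the main obstacle to be the second bullet point, that is, verifying the containment $C_3\subseteq C_1\cup C_2$ and confirming that the second property of $\Phi$ is genuinely equivalent to (not merely sufficient for) excluding a colored ``cross term'' $\{a_1\}\cup B_2$ or $\{a_2\}\cup B_1$; one must be careful that $C_3$ is a \emph{distinct} $(t+1)$-set from $C_1$ and $C_2$ (which holds because $a_1\neq a_2$ and $B_1\neq B_2$ under the first forbidden case), so that ``rainbow'' meaningfully applies. The first bullet and the parameter computation are routine once the dictionary between set-unions and the Ali-Niesen placement is fixed.
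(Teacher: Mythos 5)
Your proof is correct and takes essentially the same route as the paper: the paper gives no explicit proof of this lemma, merely invoking the ``similar analysis'' of the disjoint-subsets construction (which in turn rests on the decodability argument in the proof of Theorem~\ref{thm:rainbow}), and your writeup is exactly that analysis specialized to singleton users --- the intersection property rules out the same-user collision, and the containment property applied to the cross term $C_3=\{a_1\}\cup B_2\subseteq C_1\cup C_2$ (distinct from $C_1,C_2$ once the first property is in force) rules out a colored cross term within a color class. Your parameter verification (cache fraction $M/N$ and rate $\binom{K}{t+1}/\binom{K}{t}$ under the trivial coloring) likewise matches what the paper asserts for the Ali--Niesen instantiation.
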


\subsection{Tang-Ramamoorthy scheme}

In the Tang-Ramamoorthy scheme \cite{2017TANGLIISIT}, the subpacketization is exponentially smaller than that of the previous scheme but with some minor loss in rate (up to a constant factor). However, $F=e^{f(\frac{N}{M})\cdot K}$ for some function $f(\cdot)$.
We show an example which can also be achieved by our generalized rainbow scheme. Assume that there exists a generator matrix $G$ of an $(n,k)$ linear block code over field $\mathbb{F}_{q}$ which has the following properties:
\begin{enumerate}
  \item \textbf{Divisibility:} $(k+1)\mid n,$
  \item \textbf{Rank property:} For every contiguous set of $k+1$ columns, every subset of $k$-columns on this $k+1$ subset has full rank.
\end{enumerate}
Then we can obtain $q^k$ codewords of length $n$.

We use $\mathcal{A}$ to denote all the pairs $(a,a')$, where $a\in[n]$ and $a'\in\{0,1,\ldots,q-1\}$.
Each codeword $b_1b_2\ldots b_n$ corresponds to the set $\{(1,b_1),(2,b_2),\ldots,(n,b_n)\}$. Therefore, all of the $q^k$ codewords correspond to $q^k$ sets, which form the family $\mathcal{B}$.
\begin{dfn}
Let $\mathcal{A}$ and $\mathcal{B}$ be the families defined above. Let $\biguplus$ be the set union operation $\bigcup$ and $\mathcal{C}=\mathcal{A}\cup\mathcal{B}$. It is easy to show that for any $C\in\mathcal{C}$, $C$ contains exactly $n$ or $n+1$ pairs. Let $\hat{\mathcal{C}}\subseteq\mathcal{C}$ be the collection of elements in $\mathcal{C}$ which contains exactly $n+1$ pairs. It is easy to see $|\hat{\mathcal{C}}|=|\mathcal{C}|-o(|\mathcal{C}|)$. We then color every element using the proper coloring function, and leave the elements in $\mathcal{C}\setminus\hat{\mathcal{C}}$ uncolored. The placement phase and delivery phase are the same as those in Definition \ref{def:generalizedScheme}.
\end{dfn}

Now, we give a coloring function when $n=k+1$. Any $\{C_1,\dots,C_{k+1}\}\subseteq\hat{\mathcal{C}}$ satisfying
\begin{flalign}\label{rainbow_pro}
|C_i\cap C_j|=n,~\text{for any }i\ne j\in[k+1],
\end{flalign}
forms a color class. Next we explain such a coloring function in detail.

For each sequence $s_1s_2\ldots s_n$ which is not a codeword, its corresponding set $\{(1,s_1),\ldots,(n,s_{n})\}\notin\mathcal{B}$. Due to the rank property, for each $j\in[n]$, there is a unique codeword $b_1b_2\ldots b_n$ such that $b_i=s_i$ for all $i\in[n]\setminus\{j\}$. Therefore, for $j\in[n]$, there is a unique element $B_j$ in $\mathcal{B}$ such that $\{(k,s_k)|k\in[n]\setminus j\}\subseteq B_{j}$. We color $B_j\cup(j,s_j)\in\hat{\mathcal{C}}$ for every $j\in[k+1]$ with the same color, thus the condition (\ref{rainbow_pro}) is satisfied, where the set of $n$ common pairs corresponds to the sequence $s_1s_2\ldots s_n$.

\begin{example}
  Suppose $q=2$, $k=2$, $n=k+1=3$, we have $K=6$, $F=4$. The generator matrix $G$ of $(3,2)$ block code is $$\left[\begin{array}{ccc}1&0&1\\0&1&1\end{array}\right].$$
  Then $\mathcal{A}=\{(1,0),(1,1),(2,0),(2,1),(3,0),(3,1)\}$, $\mathcal{B}=\{B_1,B_2,B_3,B_4\}$, where
  \begin{flalign*}
 B_1=\{(1,0),(2,0),(3,0)\},\\
 B_2=\{(1,0),(2,1),(3,1)\},\\
 B_3=\{(1,1),(2,0),(3,1)\},\\
 B_4=\{(1,1),(2,1),(3,0)\}.
  \end{flalign*}
  Using the procedure above, we can color the elements in $\hat{C}$ with $4$ colors.
\begin{flalign*}
 c_1=\{&\{(1,1),(1,0),(2,0),(3,0)\},\\
 &\{(2,0),(1,1),(2,1),(3,0)\},\\
 &\{(3,0),(1,1),(2,0),(3,1)\}\};\\
 c_2=\{&\{(2,1),(1,0),(2,0),(3,0)\},\\
 &\{(1,0),(1,1),(2,1),(3,0)\},\\
 &\{(3,0),(1,0),(2,1),(3,1)\}\};\\
 c_3=\{&\{(3,1),(1,0),(2,0),(3,0)\},\\
 &\{(1,0),(1,1),(2,0),(3,1)\},\\
 &\{(2,0),(1,0),(2,1),(3,1)\}\};\\
 c_4=\{&\{(3,1),(1,1),(2,1),(3,0)\},\\
 &\{(1,1),(1,0),(2,1),(3,1)\},\\
 &\{(2,1),(1,1),(2,0),(3,1)\}\}.
\end{flalign*}
Finally, it achieves a $(K,F,\frac{M}{N},R)=(6,4,\frac{1}{2},1)$ centralized coded caching scheme.
\end{example}

\section{New rainbow schemes for coded caching}\label{rainbow}

\subsection{New rainbow schemes}
In this section, we introduce our new scheme for the centralized coded caching problem under the rainbow framework in the previous section. Suppose there are $K$ users served through a noiseless broadcast channel by an agent who has access to $N$ distinct files from a library. Every user is equipped with a local cache of size $M.$ The key problem is to design the \emph{placement phase} where the user caches file packets from the library under the cache constraint and the \emph{delivery phase} where the user reveals his own demand so that all demands should be satisfied with at most $R$ file transmissions.

Let $\mathcal{W}=\{W_{1},W_{2},\ldots,W_{N}\}$ be a library of $N$ files. Let $(W_{i}^{(1)},W_{i}^{(2)},\ldots,W_{i}^{(F)})\in \mathbb{F}^{F\times 1}$ be a vector of length $F$ over some field $\mathbb{F}$ representing file $W_{i}.$ Then we recall the $(R,K,M,N,F)$ centralized coded caching scheme as follows.

\begin{dfn}
  Every file $W_{i}$ in the library is divided into $F$ packets for $1\leqslant i\leqslant N.$ An $(R,K,M,N,F)$ centralized coded caching scheme consists of:
  \begin{enumerate}
    \item A family of subsets $\{W_{i,j}\}_{i\in [N],j\in [K]},$ where $W_{i,j}\subseteq [K]$ is the set of user caches where the $i$-th packet of file $j$ is stored. Moreover, each user can cache at most $MF$ file packets in placement phase.
    \item A set of user demands $\textbf{d}=(d_{1},d_{2},\ldots,d_{K})$ arising from the library, where $d_{i}\in [N]$ is the index of the requested file of the user $k.$ The transmission function $\phi(W_{d_{1}},W_{d_{2}},\ldots,W_{d_{K}})\rightarrow \mathbb{F}^{RF\time 1}$ for some field $\mathbb{F}$ such that every user $s$ can decode their demanded files $W_{d_{s}}$ via $\phi$ and the cache content available.
    \item For any demand pattern among the users arising from the library, the total number of file transmission can be at most $R.$
  \end{enumerate}
\end{dfn}

\begin{dfn}[Rainbow coded caching scheme]\label{def:basicRainbow}
  Let $K=m$ be an integer. Let $A\subseteq [2m]$ be an $(\alpha,\beta)$-only rainbow $3$-APs set. More precisely, $A\subseteq [2m]$ is a set of size at least $2m-(2m)^{\alpha},$ and let $\chi$ be a coloring of $A$ with at most $(2m)^{\beta}$ colors such that every $3$-AP in $A$ is rainbow. Let $A_{1}=A_{2}=[m],$ we consider the following sum set
  \begin{equation*}
    A_{1}+A_{2}=\{x+y:x\in A_{1}, y\in A_{2}\}.
  \end{equation*}
  We then color the pairs $(x,y)\in A_{1}\times A_{2}$ as
  \[\Psi((x,y))=\left
  \{\begin{array}{ll}

uncolored,&\text{$x+y\notin A$},\\

(x-y,\chi(x+y)),&

\text{$x+y\in A$}.

\end{array}\right.\]
    Then we describe the placement phase and delivery phase with assistance of the above colored sum set. In this scheme, every file in the library is split into $F=K$ packets.
    \begin{enumerate}
      \item \textbf{Placement phase:} For uncolored elements $x+y$ in $A_{1}+A_{2},$ user $x$ caches the $y$-th packet of all files in the library.
      \item \textbf{Delivery phase:} The delivery is based on the coloring function $\Psi$. Suppose that there are $s$ elements $x_{1}+y_{1},x_{2}+y_{2},\ldots,x_{s}+y_{s}$ receiving the same color from $\Psi$, then the server broadcasts the following XOR multiplexing of packets
          \begin{equation*}
            \bigoplus\limits_{1\leqslant i\leqslant s} W_{d_{x_{i}}}^{(y_{i})}.
          \end{equation*}
          The delivery phase consists of $|\Psi|$ packet transmissions, where $|\Psi|$ is the number of colors in $\Psi$.
    \end{enumerate}
  \end{dfn}

\begin{rmk}
  Note that in this new scheme, we omit the constant $m$ defined in the framework and send messages according to their colors directly.
\end{rmk}

Next we show that our rainbow scheme is a centralized coded caching scheme.
\begin{thm}\label{thm:rainbow}
  The $(K,\alpha,\beta,\Psi)$ rainbow scheme is an $(R=\frac{|\Psi|}{F},K,M,N,F=K)$ coded caching scheme.
\end{thm}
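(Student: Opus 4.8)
~\textbf{Proof plan.}
The plan is to verify the three defining requirements of an $(R,K,M,N,F)$ coded caching scheme for the rainbow scheme of Definition~\ref{def:basicRainbow}, namely: (i) the placement respects the cache constraint and is well-defined, (ii) every user can decode the requested file from the broadcast together with the cached content, and (iii) the transmission rate equals $R=|\Psi|/F$. Since the rate claim is immediate (the delivery phase broadcasts exactly $|\Psi|$ coded packets, each of size one packet, so $R=|\Psi|/F$), the substance of the proof lies in the decoding guarantee, and this is where I would spend most of the effort.

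First I would set up the placement bookkeeping. For a fixed user $x\in[m]$, the packets cached are exactly those $y$-th packets with $x+y\notin A$; equivalently the uncached packets correspond to $y$ with $x+y\in A$. Since $A$ is dense, $|A|\ge 2m-(2m)^{\alpha}$, so all but a sublinear number of pairs lie in $A$ and the number of cached packets per user is the same quantity $F-Z$ used to compute $M/N=1-Z/F$; this fixes the cache size and shows the placement is uncoded and legitimate. The key structural fact I would extract here is that for a fixed user $x$, the map $y\mapsto x+y$ is injective, so each element $x+y\in A$ corresponds to a unique uncached packet index $y$ for that user.

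The heart of the argument is decoding. Fix a user $x_0$ and an uncached packet index $y_0$ with $x_0+y_0\in A$; I must show $W_{d_{x_0}}^{(y_0)}$ can be recovered. The pair $(x_0,y_0)$ receives color $\Psi((x_0,y_0))=(x_0-y_0,\chi(x_0+y_0))$, and it lies in the XOR $\bigoplus_i W_{d_{x_i}}^{(y_i)}$ summed over all pairs $(x_i,y_i)$ with the same color. To decode, user $x_0$ must already have cached every \emph{other} packet $W_{d_{x_i}}^{(y_i)}$ appearing in that XOR, i.e.\ for every such pair with $(x_i,y_i)\ne(x_0,y_0)$ I need $x_0+y_i\notin A$ (so $x_0$ caches the $y_i$-th packet). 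This is exactly the point where the only-rainbow-$3$-AP property enters. Suppose toward a contradiction that two pairs $(x_1,y_1)\ne(x_2,y_2)$ share the color and that $x_1+y_2\in A$ as well. Equality of the first coordinate gives $x_1-y_1=x_2-y_2$, and equality of the second coordinate gives $\chi(x_1+y_1)=\chi(x_2+y_2)$. I would then show that the three numbers $x_1+y_1$, $x_1+y_2$, $x_2+y_2$ form a $3$-term arithmetic progression inside $A$: indeed $x_1-y_1=x_2-y_2$ forces $(x_1+y_1)+(x_2+y_2)=2(x_1+y_2)$, so $x_1+y_2$ is the midpoint. Because all three lie in $A$, the only-rainbow property forces them to receive distinct colors under $\chi$; but $\chi(x_1+y_1)=\chi(x_2+y_2)$ contradicts distinctness. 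Hence $x_1+y_2\notin A$, so $x_0$ has indeed cached every other summand and can cancel them to isolate its demanded packet.

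The main obstacle, and the step to treat carefully, is precisely this $3$-AP extraction: I must confirm that $x_1+y_1$, $x_1+y_2$, $x_2+y_2$ are genuinely distinct (so that ``rainbow'' yields a nontrivial constraint) and that all three genuinely lie in $A$. Distinctness follows because $(x_1,y_1)\ne(x_2,y_2)$ together with $x_1-y_1=x_2-y_2$ forces $y_1\ne y_2$ and $x_1\ne x_2$, whence $x_1+y_1\ne x_2+y_2$; and $x_1+y_2$ differs from both endpoints unless two of them coincide, which the nondegeneracy of the progression rules out. Membership in $A$ holds for the two endpoints by hypothesis (they carry colors), so the contrapositive argument above is exactly what shows the midpoint cannot also be in $A$ while preserving the color clash. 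Once decoding is established for an arbitrary uncached $(x_0,y_0)$, every user recovers all requested packets, and combining with the rate count completes the proof.
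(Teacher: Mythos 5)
Your proposal is correct and follows essentially the same route as the paper's proof: verify the cache constraint from the density of $A$, then reduce decoding to the key claim that two same-colored pairs $(x_1,y_1)\ne(x_2,y_2)$ force $x_1+y_2\notin A$, proved by exhibiting $x_1+y_1,\ x_1+y_2,\ x_2+y_2$ as a $3$-AP whose endpoints share a $\chi$-color, contradicting the only-rainbow property. Your explicit check that the three terms of the progression are genuinely distinct is a small point the paper leaves implicit, but it is the same argument.
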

\begin{proof}
  In our rainbow scheme, the number of packets per file $F$ is equal to the number of users $K.$ Our first task is to verify that the cache constraint of every user is satisfied. Note that for every user $x\in [m],$ there are at most $(2m)^{\alpha}$ elements $b\in [2m]\setminus A,$ such that $x+y=b.$ This indicates that each user caches at most $(2m)^{\alpha}N\leqslant MF$ file packets.

Next we will show that our rainbow scheme satisfies any kind of user demands $\textbf{d}=(d_{1},d_{2},\ldots,d_{K})$ arising from the library, where $d_{i}\in [N]$ is the index of the requested file of the user $i.$ After coloring each element of sum set $A_{1}+A_{2}$ using function $\Psi,$ we consider some color class $\mathcal{C}_{j}$ consisting of $c_{j}$ elements
\begin{equation*}
  x_{1}+y_{1},x_{2}+y_{2},\ldots,x_{c_{j}}+y_{c_{j}},
\end{equation*}
and the corresponding XOR transmission consisting of $c_{j}$ packets:
\begin{equation*}
 \bigoplus\limits_{1\leqslant i\leqslant c_{j}}W_{d_{x_{i}}}^{(y_{i})}.
\end{equation*}
Then we analyze the decoding algorithm for each user. For a user $x\in [m]$ requesting a certain file $W_{d_{x}},$ he has cached the set of packets $\{W_{s,d_{x}}:\text{$s+x$ is uncolored}\}$ in the placement phase. Hence, to decode the requested file $W_{d_{x}},$ it suffices to obtain the uncached packets. We just need to show the following result.
\begin{clm}\label{claim:s=t}
Let $\mathcal{C}_{j}$ be some color class consisting of $c_{j}$ elements
\begin{equation*}
  x_{1}+y_{1},x_{2}+y_{2},\ldots,x_{c_{j}}+y_{c_{j}}.
\end{equation*}
  Then for each $1\leqslant s,t\leqslant c_{j}$, $x_{s}+y_{t}\in A$ if and only if $s=t.$
\end{clm}
\begin{proof}[Proof of Claim \ref{claim:s=t}]
  Trivially, $x_{s}+y_{t}\in A$ if $s=t$ by the definition. On the other hand, if $s\neq t,$ suppose $x_{s}+y_{t}\in A,$ the element $x_{s}+y_{t}$ will receive a color from function $\Psi.$ Without loss of generality, let $\Psi(x_{s}+y_{s})=\psi(x_{t}+y_{t})=\Psi_{1}$ and $\psi(x_{s}+y_{t})=\Psi_{2}.$ Let $\Delta=x_{s}-y_{s}=x_{t}-y_{t},$ then we can write $x_{s}+y_{t}$ as
  \begin{equation*}
    x_{s}+y_{t}=\frac{(2x_{s}-\Delta)+(2y_{t}+\Delta)}{2}=\frac{(x_{s}+y_{s})+(x_{t}+y_{t})}{2}.
  \end{equation*}
Note that $x_{s}+y_{s},$ $x_{s}+y_{t}$ and $x_{t}+y_{t}$ form a $3$-AP in $A.$ However, $x_{s}+y_{s}$ and $x_{t}+y_{t}$ are in the same color class, which implies $\chi(x_{s}+y_{s})=\chi(x_{t}+y_{t})$ by the definition of $\Psi.$ That is impossible since every $3$-AP in $A$ is rainbow, then the claim follows.
\end{proof}
By Claim \ref{claim:s=t}, it holds that user $x_{s}$ knows all the packets $\{W_{d_{x_{i}}}^{(y_{i})}: 1\leqslant i\leqslant c_{j},i\neq s\}$ in his cache at the placement phase. Then the unknown packets can be easily obtained by substraction operation. Every user will recover his requested file by this decoding algorithm, therefore the rainbow scheme works.
Finally, it is easy to see that the number of colors used by $\Psi$ is at most $(2m)^{1+\beta}.$ This completes the proof of Theorem \ref{thm:rainbow}.
\end{proof}

Next we give an example to show the practicality of our coded caching scheme.
\begin{example}
  Suppose $K=F=4,$ we color the elements in $\{1,2,\ldots,8\}$ as follows.

  \[\Theta(x)=\left
  \{\begin{array}{ll}
uncolored,&\text{$x=1,5$},\\
a,&\text{$x=2,8$},\\
b,&\text{$x=3,7$},\\
c,&\text{$x=4,6$}.
\end{array}\right.\]

Using the function $\Psi$ in Definition~\ref{def:basicRainbow}, we present a table as follows.
\begin{table}[!htbp]
\caption{$K=F=4$ rainbow coded caching scheme}\label{table:example}
\centering
\begin{tabular}{|c|c|c|c|c|}
\hline
\diagbox{$K$}{$F$}&$1$&$2$&$3$&$4$\\
\hline
$1$&$(0,a)$&$(-1,b)$&$(-2,c)$&uncolored\\
\hline
$2$&$(1,b)$&$(0,c)$&uncolored&$(-2,c)$\\
\hline
$3$&$(2,c)$&uncolored&$(0,c)$&$(-1,b)$\\
\hline
$4$&uncolored&$(2,c)$&$(1,b)$&$(0,a)$\\
\hline
\end{tabular}
\end{table}
As we can see in Table~\ref{table:example}, every uncolored pair $(x,y)$ represents a caching action in the placement phase and every color class corresponds to a transmission in delivery phase. Finally, it achieves a $(K,F,\frac{M}{N},R)=(4,4,\frac{1}{4},6)$ centralized coded caching scheme.
\end{example}

Obviously, the limitation showed in \cite{2018ShangguanIT} indicates the following result.
\begin{thm}
  The only rainbow $3$-APs set with $n-n^{\alpha}$ elements and only $O(1)$ colors does not exist for any $0<\alpha<1.$
\end{thm}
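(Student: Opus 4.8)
The plan is to argue by contradiction, feeding the hypothetical set into the rainbow coded caching machinery built above. Suppose that for some fixed $0<\alpha<1$ there were a family of only rainbow $3$-APs sets $A\subseteq[n]$ with $|A|\geqslant n-n^{\alpha}$ whose coloring $\chi$ used only $O(1)$ colors as $n\to\infty$. Writing $n=2m$ and taking $A_{1}=A_{2}=[m]$, such a set satisfies exactly the hypotheses of Definition \ref{def:basicRainbow}, so Theorem \ref{thm:rainbow} turns it into an $(R=\frac{|\Psi|}{F},K,M,N,F=K)$ coded caching scheme with $K=F=m$.

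I would then estimate the two governing parameters of this scheme as $m\to\infty$. Since $\Psi((x,y))=(x-y,\chi(x+y))$ on the colored pairs, its first coordinate takes at most $2m-1$ values and its second, by assumption, only $O(1)$ values; hence $|\Psi|=O(m)$ and the rate is $R=\frac{|\Psi|}{F}=\frac{O(m)}{m}=O(1)$. For the cache, user $x$ stores the $y$-th packet precisely when $x+y\notin A$, and there are at most $(2m)^{\alpha}$ such values, so $\frac{M}{N}\leqslant\frac{(2m)^{\alpha}}{m}=2^{\alpha}m^{\alpha-1}\to 0$ because $\alpha<1$. Thus the assumed set would yield a scheme with linear subpacketization $F=K$, constant rate $R=O(1)$, and vanishing normalized cache $\frac{M}{N}=o(1)$.

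The contradiction then comes from a rate lower bound, which is exactly the limitation invoked from \cite{2018ShangguanIT} and is in any case implied by the cut-set bound $R\geqslant\max_{s\in\{1,\ldots,\min\{N,K\}\}}\big(s-\frac{s}{\lfloor N/s\rfloor}M\big)$ recalled in Section \ref{pre}. Taking the cut index $s$ of order $\min\{N/M,K\}$ makes the subtracted term at most half of $s$, whence $R\gtrsim\min\{N/M,K\}\gtrsim\min\{m^{1-\alpha},m\}=m^{1-\alpha}\to\infty$, directly contradicting the $R=O(1)$ established above. Since the estimate diverges for every fixed $\alpha<1$, no only rainbow $3$-APs set with $n-n^{\alpha}$ elements and $O(1)$ colors can exist for any $0<\alpha<1$.

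The routine parts—building the scheme and counting $|\Psi|$ and $\frac{M}{N}$—are immediate once Theorem \ref{thm:rainbow} is available. The one point demanding care, and the main obstacle, is the last step: one must check that the chosen cut index $s$ actually lies in the admissible range $\{1,\ldots,\min\{N,K\}\}$ and split into the two cases $s\approx N/(4M)\leqslant K$ and $s=K$ (the latter occurring when the cache is so small that $\frac{M}{N}<\frac{1}{4m}$) to confirm that the bound genuinely diverges. This is precisely where one certifies that the gap between a vanishing cache exponent $\alpha-1<0$ and constant rate cannot be bridged by any coded delivery, which is the whole content of the impossibility.
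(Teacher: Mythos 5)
Your proof is correct, and it reaches the contradiction through a genuinely different converse than the paper does. The paper's own proof is a one-line appeal to the limitation of \cite{2018ShangguanIT}, which rests on the Ruzsa--Szemer\'{e}di $(6,3)$-theorem: by Claim \ref{claim:s=t}, the rainbow scheme built from the hypothetical set is a PDA-type scheme, i.e.\ a $(6,3)$-free $3$-partite $3$-uniform hypergraph with parts of sizes $K=m$, $F=m$, $S=|\Psi|=O(m)$ and with $K(F-Z)\geqslant m(m-(2m)^{\alpha})=m^{2}-o(m^{2})$ edges, contradicting the $o((K+F+S)^{2})=o(m^{2})$ edge bound. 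You perform the same first step --- feed the set into Definition \ref{def:basicRainbow} and Theorem \ref{thm:rainbow} and compute $K=F=m$, $R=|\Psi|/F\leqslant(2m-1)\cdot O(1)/m=O(1)$, and $M/N\leqslant 2^{\alpha}m^{\alpha-1}\to 0$, all of which is right --- but you then derive the contradiction from the elementary Maddah-Ali--Niesen cut-set bound recalled in Section \ref{pre}, choosing $s\approx\min\{N/(4M),K\}$ with the admissibility check and case split you describe, so that $R\gtrsim\min\{N/M,K\}\gtrsim m^{1-\alpha}\to\infty$, contradicting $R=O(1)$. Your route buys elementarity and generality: it needs no Ruzsa--Szemer\'{e}di machinery, and it rules out \emph{any} valid caching scheme with these $(K,M/N,R)$ parameters, irrespective of subpacketization or PDA structure. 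The one inaccuracy is your parenthetical remark that the limitation of \cite{2018ShangguanIT} ``is in any case implied by the cut-set bound'': that is false in general, since the cited limitation also excludes, for example, linear $F$ with constant $M/N$ and constant rate, a regime in which the cut-set bound is vacuous. The two converses happen to coincide in force only in the vanishing-$M/N$ regime relevant here, which is exactly why your argument still goes through; this misstatement is cosmetic and does not affect the validity of your proof.
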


\subsection{Schemes taking $m$ into consideration}
Note that in the above rainbow scheme, we first color the non-cached subfiles such that special structure in $\hat{\mathcal{C}}$ is rainbow and then deliver messages depending on their colors. If different colors can be sent together, then the transmission load can be further reduced. In Example~\ref{egcaching}, making use of the results of the index coding problem, we calculate a constant $m$, and use an $m\times|\Phi|$ MDS matrix to combine different colors.

\begin{rmk}
  In Example~\ref{egcaching}, if we deliver the subfiles according to different colors, we have to send messages at 4 times. But if we use the new delivery scheme, we only need to send messages at $3$ times and for each time, the subfiles corresponding to $2$ colors are used.
\end{rmk}

For general case, let $\mathcal{A}=[n]$ and $\mathcal{B}\subset\binom{[n]}{t}$. Set the operator as set union $\cup$.
Then $\mathcal{C}=\mathcal{A}\cup\mathcal{B}$ contains some $t$-tuples and $(t+1)$-tuples of $[n]$. Define $\hat{\mathcal{C}}=\mathcal{C}\cap\binom{[n]}{t+1}$, and the coloring function $\phi$ over $\hat{\mathcal{C}}$ must satisfy the following rainbow conditions.
\begin{itemize}
    \item If $C_{1},C_{2}\in\hat{\mathcal{C}}$ and $C_{1}\bigcap C_{2}\neq\emptyset,$ then $\Phi(C_{1})\neq\Phi(C_{2}).$
    \item If $C_{1},C_{2},C_{3}\in\hat{\mathcal{C}}$ and $C_{i}\subseteq C_{j}\bigcup C_{k}$ for $i\neq j\neq k,$ then $C_{1},C_{2}$ and $C_{3}$ receive different colors.
\end{itemize}
The coloring function $\Phi$ over $\{(A,B):\ A\cup B\in\hat{\mathcal{C}}\}$ is defined as
\begin{equation*}
  \Phi(A,B)=\phi(A\cup B).
\end{equation*}
Suppose that $(A_1,B_1),(A_2,B_2),\ldots,(A_s,B_s)$ are assigned with the same color $c$, which implies that $A_i\cup B_j\notin\hat{\mathcal{C}}, \forall i\neq j$. Thus user $A_i$ can decode $W_{d_{A_i}}^{(B_i)}$ from $W_c=\bigoplus_{j\in[s]} W_{d_{A_j}}^{(B_j)}$.
For each pair $(A,B)$ with $A\cup B\in\hat{\mathcal{C}}$, define a constant
\begin{equation*}
  m(A,B)=\#\{\Phi(A'\cup B'):\ A'=A\ or\ B'\cup A\in\hat{\mathcal{C}}\},
\end{equation*}
where $(A',B')=(A,B)$ is allowed. Then define
\begin{equation*}
  m=\max\{\{m(A,B):A\cup B\in\hat{\mathcal{C}}\}\cup\{|\Phi|-1\}\}.
\end{equation*}
Let $P$ be an $m\times|\Phi|$ MDS matrix,
\begin{equation*}
  P=\begin{pmatrix}
      p_1 & p_2 & \cdots & p_\Phi
    \end{pmatrix}.
\end{equation*}
During the delivery phase, the server sends
\begin{equation*}
  P\cdot(W_{c_1}, W_{c_2}, \cdots, W_{c_{|\Phi|}})^T.
\end{equation*}
Based on the above construction, we have the following theorem.
\begin{thm}
  The rainbow framework with new delivery scheme described above is an $(R=\frac{m}{F}, K=|\mathcal{A}|, F=|\mathcal{B}|,M,N)$ coded caching scheme.
\end{thm}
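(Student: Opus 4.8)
The plan is to verify three things in turn: the cache constraint of the placement, the rate $R=m/F$, and the decodability of every user's request; the first two are immediate and the real content is decodability. For the rate, the server transmits the length-$m$ vector $P\cdot(W_{c_1},\dots,W_{c_{|\Phi|}})^T$, i.e. $m$ packets, and since each file is split into $F=|\mathcal{B}|$ packets this gives $R=m/F$. For the cache, user $A\in\mathcal{A}=[n]$ stores exactly those packets $B\in\mathcal{B}$ with $A\cup B\notin\hat{\mathcal{C}}$, equivalently $A\in B$, which fixes $M$. I would then split decodability into a purely combinatorial \emph{cross-term lemma}, relying on the two rainbow conditions on $\phi$, and a linear-algebraic \emph{MDS-recovery step}, relying on $P$ being MDS.

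For the cross-term lemma, I would fix a user $A$ and a wanted packet $W_{d_A}^{(B)}$, so $(A,B)$ is colored with $c=\phi(A\cup B)$, and show that in the class $W_c=\bigoplus_{\phi(A'\cup B')=c}W_{d_{A'}}^{(B')}$ the user $A$ already knows every term except $W_{d_A}^{(B)}$. First, $A$ occurs at most once in the class: if $(A,B_1)$ and $(A,B_2)$ both had color $c$, then $(A\cup B_1)\cap(A\cup B_2)\ni A$, contradicting the first rainbow condition. Second, for any other term $(A',B')$ with $A'\neq A$ I claim $A\in B'$, so $A$ has it cached. Suppose instead $A\notin B'$; then $C_1=A\cup B$, $C_2=A'\cup B'$ and $C_3=A\cup B'$ all lie in $\hat{\mathcal{C}}$ (each has size $t+1$ and is a union of a member of $\mathcal{A}$ with a member of $\mathcal{B}$), and since $C_3=\{A\}\cup B'\subseteq C_1\cup C_2$ the second rainbow condition forces $C_1,C_2,C_3$ to receive distinct colors, contradicting $\phi(C_1)=\phi(C_2)=c$. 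The degenerate cases $B=B'$ or $A=A'$ are disposed of directly by the first condition. Hence every term of $W_c$ other than the wanted one is cached by $A$.

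For the MDS-recovery step I would work in the index-coding picture and identify $m(A,B)$ with the number of distinct colors carried by the pairs $(A',B')$ that $A$ cannot compute, namely those with $A'=A$ or $B'\cup A\in\hat{\mathcal{C}}$ (equivalently $A\notin B'$); writing $S_A$ for this set of colors, $|S_A|=m(A,B)\leq m$. For every color $c\notin S_A$, all terms of $W_c$ are cached by $A$, so $A$ computes $W_c$ outright. Subtracting the corresponding columns of $P$ from the received vector leaves $A$ with $m$ linear equations in the $|S_A|\le m$ unknowns $\{W_c:c\in S_A\}$; since $P$ is MDS, any $\le m$ of its columns are linearly independent, so the relevant submatrix has full column rank and $A$ recovers every $W_c$ with $c\in S_A$, hence every color-class sum. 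Combining with the cross-term lemma, $A$ cancels the cached terms of each wanted class and extracts $W_{d_A}^{(B)}$. The lower bound $m\ge|\Phi|-1$ built into the definition of $m$ is exactly what guarantees such an MDS matrix exists over $\mathbb{F}_2$.

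The main obstacle I expect is making the cross-term lemma and the MDS step mesh: one must confirm that the colors $A$ cannot compute from its cache are precisely those counted by $m(A,B)$ and number at most $m$, so that the MDS submatrix on those columns is invertible. Once the two rainbow conditions are seen to encode exactly non-adjacency (first condition) and the absence of connecting cross-terms (second condition) in a proper local coloring of the underlying index-coding graph, the local-chromatic-number scheme of \cite{2013shanmugamlocal} applies and the remaining verification is routine.
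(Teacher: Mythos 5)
Your proposal is correct and takes essentially the same route as the paper's proof: the paper likewise reduces everything to decodability, uses the within-class cancellation property (asserted, via the same rainbow conditions, in the discussion immediately preceding the theorem) and then the observation that every color sum the user cannot compute from its cache lies among the $m(A_i,B_j)\le m$ colors counted in the definition of $m$, so the MDS property of $P$ lets the user solve for them. Your write-up simply makes explicit what the paper leaves implicit, namely the proof of the cross-term lemma from the two rainbow conditions and the full-column-rank argument for the relevant submatrix of $P$.
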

\begin{pf}
  It suffices to prove the solvability for each user $A_i$. From the definition of $\hat{\mathcal{C}}$, we know that user $A_i$ caches the subfile $B_j$ if  $A_i\cup B_j\notin\hat{\mathcal{C}}$, and $A_i$ can decode the subfile $B_j$, which is not cached by the user, from $W_{c_1}$ if $\Phi(A_i,B_j)=c_1$. From the definition of $m(A_i,B_j)$, it is not difficult to see that every subfile $W_{d_{A_m}}^{(B_n)}$ with $A_m\cup B_n\in\hat{\mathcal{C}}$ which is cached by user $A_i$ is not contained in the set
  \begin{equation*}
    \{(A',B'):\ A'=A_i\ or\ B'\cup A_i\in\hat{\mathcal{C}}\}.
  \end{equation*}
  Thus, for user $A_i$, after deleting the subfiles he has cached, we have
  \begin{equation*}
    (p_1,p_2,\cdots,p_{m(A_i,B_j)})\cdot(W_{c_1}, W_{c_2}, \cdots, W_{c_{m(A_i,B_j)}})^T.
  \end{equation*}
  Since $P$ is an MDS matrix and $m\geq m(A_i,B_j)$, user $A_i$ can decode $W_{c_1}$ and further decode $W_{d_{A_i}}^{(B_j)}$. The similar approach can be used for any $(A_i, B_j)$, thus the solvability of the new delivery scheme is proved.
\end{pf}

\section{Application to distributed computing}\label{distributed}
In this section, we present the application of our rainbow framework in the distributed computing problem. We start with an illustrative example.
\begin{example}\label{egdistributed}
  Let $\mathcal{A}=[4],\mathcal{B}=\{12,23,34,41\}$. Suppose there are $4$ files $\{W_B:B\in\mathcal{B}\}$, $4$ work nodes $\{K_a:a\in\mathcal{A}\}$ and $4$ functions $\{\phi_a:a\in\mathcal{A}\}$. In the map phase, work node $K_a$ is assigned file $W_B$ if $a\in B$, and computes the intermediate values $\{v_{a',B}:a'\in\mathcal{A}, a\in B\}$. For example, work node $K_1$ gets files $W_{12},W_{41}$ and computes all intermediate values based on these two files, i.e. $\{v_{a',12},v_{a',41}\}$ for all $a'\in\mathcal{A}$. In the shuffle phase, work node $K_a$ needs all the intermediate values to compute $\phi_a$. In this case,
  \begin{enumerate}
    \item $K_1$ has $v_{a',12},v_{a',41}$ and needs $v_{1,23},v_{1,34}$;
    \item $K_2$ has $v_{a',12},v_{a',23}$ and needs $v_{2,41},v_{2,34}$;
    \item $K_3$ has $v_{a',23},v_{a',34}$ and needs $v_{3,12},v_{3,41}$;
    \item $K_4$ has $v_{a',34},v_{a',41}$ and needs $v_{4,23},v_{4,12}$.
  \end{enumerate}
  From the rainbow framework above, there are $4$ color groups
  \begin{equation*}
    \{v_{1,23},v_{3,12}\},\{v_{1,34},v_{3,41}\},\{v_{2,41},v_{4,12}\},\{v_{2,34},v_{4,23}\}.
  \end{equation*}
  Then, during the shuffle phase,
  \begin{enumerate}
    \item $K_1$ sends $v_{2,41}\bigoplus v_{4,12};$
    \item $K_2$ sends $v_{1,23}\bigoplus v_{3,12};$
    \item $K_3$ sends $v_{2,34}\bigoplus v_{4,23};$
    \item $K_4$ sends $v_{1,34}\bigoplus v_{3,41}.$
  \end{enumerate}
  It is easy to check that each work node can get all the intermediate values it needs to complete the computation. In this example, there are $4\times4=16$ intermediate values and in the shuffle phase, there are only $4$ transmissions, each of size equal to one intermediate value. Thus, the communication load $L=\frac{4}{16}=\frac{1}{4}$.
\end{example}

\begin{rmk}
  Note that in Example \ref{egdistributed}, the computation load $r=2$ and the number of work nodes $K=4$. From the relationship of the communication load and computation load in \cite{2018ITdistributed}, $L\geq\frac{1}{r}(1-\frac{r}{K})=\frac{1}{4}$, which shows our rainbow scheme also attains the optimal transmission while only needs to separate the original file into $4$ parts which appeared in \cite{2018ITdistributed}, the number of files has to be the multiple of $\binom{K}{r}=6$. Comparing with the previous scheme constructed from PDAs and resolvable designs \cite{2020YanDistributed},\cite{2020KKRA}, our rainbow scheme has a lower communication load. In detail, for both PDA scheme and resolvable design scheme, when $r=2, K=4$, their communication load $L=\frac{1}{2}>\frac{1}{4}$.
\end{rmk}

\begin{rmk}
  The main difference between the rainbow scheme and the previous schemes is that we do not require multicast groups. In \cite{2020KKRA},\cite{2018ITdistributed},\cite{2020YanDistributed}, all work nodes send intermediate values according to the multicast groups. In each multicast group, after deleting any one work node in the group, the remaining nodes share only one file which is needed by the deleted node. Thus, this common file is partitioned into pieces and sent by all remaining nodes. But in the rainbow scheme for coded caching shown in Example \ref{egcaching}, the transmissions should be
  \begin{align*}
      & v_{1,23}\bigoplus v_{3,12}\bigoplus v_{2,34}\bigoplus v_{4,23}, \\
      & v_{2,14}\bigoplus v_{4,12}\bigoplus v_{2,34}\bigoplus v_{4,23}, \\
      & v_{1,34}\bigoplus v_{3,14}\bigoplus v_{2,34}\bigoplus v_{4,23}.
  \end{align*}
  We can partition each transmission into $2$ parts, each part can be sent by only one work node. Therefore, we obtain the shuffle phase in Example \ref{egdistributed}.
\end{rmk}

We modify the rainbow framework in Section~\ref{framework} to be applicable to distributed computing problem as follows.
\begin{dfn}
  Let $K=|\mathcal{A}|,$ $N=|\mathcal{B}|$ and $K|Q$, define $\mathcal{C}=\mathcal{A}\biguplus\mathcal{B}.$ Let the $\sigma$-type structure, $\hat{\mathcal{C}},$ and the coloring functions $\phi$ and $\Phi$ be the same as those given in Definition~\ref{def:generalizedScheme}.
  Then we describe the MapReduce scheme with assistance of the above colored subset $\hat{\mathcal{C}}.$
  \begin{enumerate}
    \item \textbf{Map phase:} Work node $a$ caches $\{W_b: a\biguplus b\notin\hat{\mathcal{C}}\},$ and computes intermediate values $\{v_{q,b}:q\in[Q],b\in\mathcal{B}\}$.
    \item \textbf{Shuffle phase:} Work node $a$ is assigned to compute functions $\mathcal{Q}_a$ with size $\frac{Q}{K}$.
    The communication between nodes is based on the coloring function $\Phi$. Denote the delivery in Definition~\ref{def:generalizedScheme} as
    \begin{equation*}
      P\cdot(v_{c_1}, v_{c_2}, \cdots, v_{c_{|\Phi|}})^T=P\cdot V,
    \end{equation*}
    where $P$ is an $m\times|\Phi|$ MDS array, and $v_{c_i}$ is the $XOR$ sum of all the intermediate values that have color $c_i$.
    Suppose $P\cdot V$ can be decomposed into a linear combination of $v_1,v_2,\cdots,v_{m'}$, where each $v_i$ can be contained in one work node $i$. Then the final transmission is that node $i$ sends $v_i$ for $i\in[m']$.
    \item \textbf{Reduce phase:} After transmission, each work node gets the information it needs and computes functions which are assigned to it.
  \end{enumerate}
\end{dfn}

The main idea of this new shuffle phase is that we separate the transmission of the rainbow framework in Section~\ref{framework} into several pieces such that each can be sent by one user. Note that for a $g$-regular PDA scheme in distributed computing, that is, each color group (or multicast group) contains exactly $g$ work nodes, the transmission load in each group is $\frac{g}{g-1}$, therefore, the whole communication load is $\frac{g}{g-1}|\Phi|$. In our scheme, if the transmission of the rainbow framework is partitioned into $m'$ pieces, then the communication load is $m'$. Therefore, when $m=|\Phi|$, unless each color scheme can be sent by one work node, the original PDA scheme performs better. However, for some special cases where $m< |\Phi|$, $m'$ can be smaller than the previous scheme, which means for the same number of files, our scheme can have a lower transmission load. We present a class of schemes with $m'<\frac{g}{g-1}|\Phi|$ under the same number of files.

\begin{example}
  Consider $n\geq 5$ (where $n=4$ is considered in Example~\ref{egdistributed}). Let $\mathcal{A}=[n],$ $\mathcal{B}=\{[i:i+n-3]\mod n:i\in[n]\}$ and $\hat{\mathcal{C}}=\mathcal{C}\cap\binom{[n]}{n-1}$. Suppose there are $N=\mathcal{B}$ files, $K=\mathcal{A}$ work nodes and $Q$ functions, for convenience, we assume $Q=K$. Coloring all elements in $\hat{\mathcal{C}}$ with different colors, thus $|\Phi|=n$. For each pair $(a,b)$,
  \begin{equation*}
    m(a,b)=\#\{\Phi(a'\cup b'):b'\cup a\in\hat{\mathcal{C}}\}= 3,
  \end{equation*}
  therefore, $m=|\Phi|-1=n-1$. Select the $(n-1)\times n$ MDS matrix $P$ as follows
  \begin{equation*}
    \begin{pmatrix}
      1 & 1 & 0 & \cdots & 0 & 0\\
      0 & 1 & 1 & \cdots & 0 & 0\\
      \vdots & \vdots & \vdots & \ddots & \vdots & \vdots\\
      0 & 0 & 0 & \cdots & 1 & 1
    \end{pmatrix},
  \end{equation*}
  the $i$-th column of $P$ corresponds to the color of $[i:i+n-2]\in\hat{\mathcal{C}}$. Note that work node $i+2$ contains all intermediate values of $\{v_{a,b}:a\cup b=[i:i+n-2]\}$ and $\{v_{a,b}:a\cup b=[i+1:i+n-1]\}$ since $n\geq5$. Therefore in the shuffle phase, work node $i+1$ sends the $i$th row of $P\times(v_{[1:n-1]},v_{[2:n]},\cdots,v_{[n:n-2]})$ where $v_{[i:i+n-2]}=\bigoplus_{a\cup b=[i:i+n-2]}v_{a,b}$.

  In this scheme, there are $m'=n-1<n=|\Phi|$ work nodes that need to send a message, therefore the final transmission load $L=\frac{n-1}{n^2}$.
\end{example}

\begin{rmk}
  In the above example, we obtain a scheme with $L=\frac{n-1}{n^2}$ and $K=N=n$, $r=n-2$. Comparing with existing schemes, our scheme has $N=\Theta(K)$ and $L/L^*=\Theta(K)$.
\end{rmk}

When $m'\geq\frac{g}{g-1}|\Phi|$, we still use the same multicast group idea as the PDA schemes in \cite{2020YanDistributed}.

\section{Conclusion}\label{conclusion}

In this paper, we investigate two highly-related network communication problems which are known as coded caching and distributed computing respectively. Motivated by the study of only rainbow $3$-APs set, we propose a generalized rainbow framework which can be applied to both problems.
We observe that any PDA scheme can be represented by a rainbow scheme under our framework, and several existing works can also be included in the framework.
Our rainbow framework build bridges between combinatorial objects and coded caching problems. For any given $\mathcal{A}$, $\mathcal{B}$ and the operation $\biguplus$, we can obtain a coded caching scheme by selecting a suitable $\sigma$-type structure and the coloring function. The freedom of choosing the structure and coloring function enables us to connect more combinatorial objects with coded caching. Moreover, using the idea of the index coding problem, our framework can further reduce the transmission load and obtain some schemes which cannot be represented by a PDA.

Next, based on the study of the only rainbow $3$-term arithmetic progressions set, we offer a coded caching scheme with linear subpacketization and near constant rate. For several existing works, we propose the corresponding coloring models and we do hope it will be helpful to solve the following problem by designing proper coloring function.
\begin{ques}
  Let $\frac{M}{N}$ and $R$ be both constants, prove or disprove the existence of centralized coded caching schemes such that $F$ grows polynomially with $K.$
\end{ques}

At last, we investigate the application of this rainbow framework in distributed computing problems. For some special cases, we propose a new shuffle  phase, in which work nodes do not need to send messages based on multicast groups. This new communication scheme may bring some advantages comparing with the PDA schemes under the same number of files.
Due to the connection between coded caching and distributed computing, the number of files in distributed computing is equivalent to the number of subpaketizations of coded caching. Therefore, a similar question can be asked as following.
\begin{ques}
  Let $r$ be a constant. Suppose $N$ cannot be divided by $\binom{K}{r+1}$ or is small than $\exp(K)$, construct schemes which can attain the optimal or suboptimal communication load $L$.
\end{ques}

\bibliographystyle{IEEEtranS}
\bibliography{reference}
\end{document}